\documentclass[submission,copyright,creativecommons]{eptcs}

\usepackage{iftex}

\ifpdf
  \usepackage{underscore}
  \usepackage[T1]{fontenc}
\else
  \usepackage{breakurl}
\fi

\usepackage[ruled,vlined]{algorithm2e}
\usepackage{algorithmic}
\usepackage{amsfonts}
\usepackage{amsmath}
\usepackage{amsthm}
\usepackage{graphicx}
\usepackage{hyperref}
\usepackage{imakeidx}
\usepackage{multicol}
\usepackage[square, numbers]{natbib}
\usepackage{scalerel}
\usepackage{setspace}
\usepackage{tikz}
\usepackage{wrapfig}
\usepackage{doi}

\bibliographystyle{abbrvnat}

\setlength{\belowcaptionskip}{-6pt}

\definecolor{graph_term}{RGB}{223, 223, 223}

\tikzset{ 
ev/.pic = { \fill[black] (0.0, 0.0) circle (1.0mm) ; } ,
iv/.pic = { \filldraw[black, fill=white, line width=0.5mm] circle (.8mm) ; } ,
de/.style = { black, line width=0.5mm, -> } ,
ue/.style = { black, line width=0.5mm } ,
he/.style = { black, line width=1.0mm } ,
hl/.style = { fill=white, draw=black, line width=0.5mm, rounded corners=0.1mm } ,
ht/.style = { fill=graph_term, draw=black, line width=0.25mm, rounded corners=0.1mm } ,
el/.style = { text width = 1.5mm, text height = 1.5mm, font = \scriptsize, align=center} ,
pn/.style = { text width = 10.0mm, align=center } ,
hn/.style = { text width = 3.0mm, align=center } ,
tn/.pic = { \draw[fill=white, draw=black, line width=0.5mm] circle (30.0mm) ; } ,
cn/.pic = { \draw[fill=white, draw=black, line width=0.5mm] circle (5.0mm) ; } ,
te/.style = { fill=white } ,
tb/.style = { draw=black, line width=0.25mm } ,
sl/.style = { sloped, scale=2, pos=0.5, allow upside down }
}

\newtheorem{theorem}{Theorem}[section]
\newtheorem{lemma}{Lemma}[section]
\theoremstyle{definition}

\title{Random Graph Generation in\\ Context-Free Graph Languages}
\author{Federico Vastarini
\institute{University of York\\ York, UK}
\email{federico.vastarini@york.ac.uk}
\and
Detlef Plump
\institute{University of York\\ York, UK}
\email{detlef.plump@york.ac.uk}
}
\newcommand{\titlerunning}{Random Graph Generation in Context-Free Graph Languages}
\newcommand{\authorrunning}{Federico Vastarini, Detlef Plump}
\hypersetup{
  bookmarksnumbered,
  pdftitle    = {\titlerunning},
  pdfauthor   = {\authorrunning},
  pdfsubject  = {Random Graph Generation in Context-Free Graph Languages},
  pdfkeywords = {hypergraph, hyperedge, replacement, language, grammar, mairson, uniform, sampling, generation, polynomial}
}
\begin{document}
\maketitle

\begin{abstract}
We present a method for generating random hypergraphs in context-free hypergraph languages. It is obtained by adapting Mairson's generation algorithm for context-free string grammars to the setting of hyperedge replacement grammars. Our main results are that for non-ambiguous hyperedge replacement grammars, the method generates hypergraphs uniformly at random and in quadratic time. We illustrate our approach by a running example of a hyperedge replacement grammar generating term graphs.\end{abstract}

\section{Introduction}
\label{sec:introduction}
We present a novel approach to the generation of random hypergraphs in user-specified domains. Our approach extends a method of Mairson for generating strings in context-free languages \cite{mairson-1994-gwi} to the setting of context-free hypergraph languages specified by hyperedge replacement grammars. Generating (or ``sampling") graphs and hypergraphs according to a given probability distributions is a problem that finds application in testing algorithms and programs working on graphs. Molecular biology and cryptography are two fields of potential application where our methods could find a concrete use besides the mere software testing. In \cite{kajino-2019-mhg} Kajino presents a novel approach for the representation of molecules through hypergraphs. Specifically adapting our method to this setting would provide an instrument for the exploration of new compounds in the field of molecular biology. The uniformity of the distribution of our method is a fundamental requirement for the development of cryptographic protocols. In \cite{goldreich-2011-cow} and \cite{micali-2002-tss} we may find some useful insights on how to model graph based algorithms in that domain. In the setting of hyperedge replacement grammars, we believe that there is an opportunity for the development of one-way functions.

Our generation algorithm uses as input a hyperedge replacement grammar in Chomsky normal form \cite{chomsky-1959-ocf} and a positive integer $n$. The former specifies the hypergraph language to sample from, the latter the size of the hypergraph to be generated. The algorithm then chooses a hypergraph at random from the slice of the language consisting of all members of size $n$. We show that if the grammar is non-ambiguous, the generated samples are uniformly distributed. The only requirements for our method are that the properties sought for the generated hypergraphs are representable by a hyperedge replacement language and that, to guarantee a uniform distribution, a non-ambiguous grammar is used as input.

We also show that our method generates a random hypergraph of size $n$ in time $O(n^2)$. This is the same time bound established by Mairson (for the first method) in the setting of random string generation in context-free languages.

\section{Hyperedge Replacement Grammars}
\label{sec:hrgl}

This section gives a concise overview of the definitions needed to understand the generation process. We also introduce our running example of a language of term graphs specified by hyperedge replacement. For comprehensive treatments of the theory of hyperedge replacement grammars and languages, we refer to Courcelle \cite{courcelle-1987-aad}, Drewes et al.\ \cite{drewes-1997-hrg} and Engelfriet \cite{engelfriet-1997-cfg}.

Let $\textit{type} \colon C \to \mathbb{N}_0$ be a typing function for a fixed set of labels $C$, then a \textit{hypergraph} over $C$ is a tuple $H=(V_H,E_H,\textit{att}_H,\textit{lab}_H,\textit{ext}_H)$ where $V_H$ is a finite set of \textit{vertices}, $E_H$ is a finite set of \textit{hyperedges}, $\textit{att}_H\colon E_H \to V^*_H$ is a mapping assigning a sequence of \textit{attachment nodes} to each $e \in E_H$, $\textit{lab}_H\colon E_H \to C$ is a function that maps each hyperedge to a \textit{label} such that $\textit{type}(\textit{lab}_{H}(e)) = |\textit{att}_{H}(e)|$, $\textit{ext}_H \in V^*_H$ is a sequence of pairwise distinct \textit{external nodes} (Figure \ref{fig:hypergraph}).
The class of all hypergraphs over $C$ is denoted by $\mathcal{H}_C$.

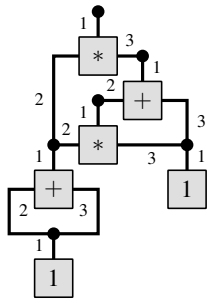
\begin{figure}[htpb]
\centering
\resizebox{2.75cm}{!}{%
\begin{tikzpicture}
 \begin{scope} [local bounding box = c1, shift = {(0.0, 0.0)}]
  \draw [ue] (-0.7, -0.4) -- (-0.7, 0.0) node[el, midway, left] {1} ;
  \draw [ue] (-1.0, -0.7) -- (-1.4, -0.7) -- (-1.4, -1.4) node[el, midway, right] {2} -- (-0.7, -1.4) ;
  \draw [ue] (-0.4, -0.7) -- (0.0, -0.7) -- (0.0, -1.4) node[el, midway, left] {3} -- (-0.7, -1.4) ;
  \draw [ue] (0.0, 1.7) -- (0.0, 2.1) node[el, midway, left] {1} ;
  \draw [ue] (-0.3, 1.4) -- (-0.7, 1.4) -- (-0.7, 0.0) node[el, midway, left] {2} ;
  \draw [ue] (0.3, 1.4) -- (0.7, 1.4) node[el, midway, above] {3} ;
  \draw [ue] (0.0, 0.3) -- (0.0, 0.7) node[el, midway, left] {1} ;
  \draw [ue] (-0.3, 0.0) -- (-0.7, 0.0) node[el, midway, above] {2} ;
  \draw [ue] (0.3, 0.0) -- (1.4, 0.0) node[el, midway, below] {3} ;
  \draw [ue] (0.7, 1.0) -- (0.7, 1.4) node[el, midway, right] {1} ;
  \draw [ue] (0.4, 0.7) -- (0.0, 0.7) node[el, midway, above] {2} ;
  \draw [ue] (1.0, 0.7) -- (1.4, 0.7) -- (1.4, 0.0) node[el, midway, right] {3} ;
  \draw [ue] (1.4, -0.4) -- (1.4, 0.0) node[el, midway, right] {1} ;
  \draw [ue] (-0.7, -1.8) -- (-0.7, -1.4) node[el, midway, left] {1} ;
  \draw [ht] (-1.0, -0.4) rectangle (-0.4, -1.0) node[midway] {$+$} ;
  \draw [ht] (-0.3, 0.3) rectangle (0.3, -0.3) node[midway] {$*$} ;
  \draw [ht] (1.1, -0.4) rectangle (1.7, -1.0) node[midway] {$1$} ;
  \draw [ht] (-0.3, 1.7) rectangle (0.3, 1.1) node[midway] {$*$} ;
  \draw [ht] (0.4, 1.0) rectangle (1.0, 0.4) node[midway] {$+$} ;
  \draw [ht] (-1.0, -1.8) rectangle (-0.4, -2.4) node[midway] {$1$} ;
  \path (0.0, 2.1) pic{ev} ;
  \path (-0.7, 0.0) pic{ev} ;
  \path (1.4, 0.0) pic{ev} ;
  \path (0.7, 1.4) pic{ev} ;
  \path (0.0, 0.7) pic{ev} ;
  \path (-0.7, -1.4) pic{ev} ;
 \end{scope}
\end{tikzpicture}%
}
\caption{A term graph}
\label{fig:hypergraph}
\end{figure}

We write $\textit{type}(H)$ for $|\textit{ext}_{H}|$ and call $H$ an $n$-hypergraph if $\textit{type}(H) = n$. The length of the sequence of \textit{attachments} $|\textit{att}_{H}(e)|$ is the \textit{type} of $e$. Hyperedge $e$ is an $m$-hyperedge if $\textit{type}(\textit{lab}(e)) = m$. We also write $\textit{type}(e) = m$ if the context is clear. If an $n$-hypergraph has exactly $1$ hyperedge and all its nodes are external, that is $E_H = \{e\}$ and $|V_H| = n$, it is called the \textit{handle} induced by $e$ and denoted by $I_e$. Moreover if $\textit{type}(e) = n$ and $\textit{ext}_H = \textit{att}(e)$ such a hypergraph is called the handle induced by $A$ and denoted by $A^{\bullet}$. We write $\textit{att}_{H}(e)_i$ for the $i$th attachment node of $e \in E_H$ and $\textit{ext}_{H,i}$ for the $i$th external node of $H$. The set $E^X_H = \{e \in E \mid \textit{lab}_H(e) \in X\}$ is the subset of $E_H$ with labels in $X \subseteq C$. We define $|H| = |V_H| + |E_H|$ as the \textit{size} of $H$ and we call $H$ a size-$n$-hypergraph if $|H| = n$.

Figure \ref{fig:hypergraph} shows a \emph{term graph}, a form of acyclic hypergraphs that represent functional expressions with possibly shared subexpressions. (See \cite{plump-1998-tgr} for an introduction to the area of term graph rewriting.) Grey boxes represent hyperedges labelled with the function symbols $*$, $+$ and $1$, while nodes are drawn as black bullets. Lines connect hyperedges with their attachment nodes, whose position in the attachment sequence is given by small numbers.

Two hypergraphs $H, H' \in \mathcal{H}_C$ are \textit{isomorphic}, denoted $H \cong H'$, if there are bijective mappings $h_V\colon V_H \to V_{H'}$ and $h_E\colon E_H \to E_{H'}$ such that $h^*_{V}(\textit{att}_{H}(e)) = \textit{att}_{H'}(h_E(e))$ and $\textit{lab}_{H}(e) = \textit{lab}_{H'}(h_E(e))$ for each $e \in E_H$ and $h^*_{V}(\textit{ext}_{H}) = \textit{ext}_{H'}$. Two isomorphic hypergraphs are considered to be the same. A hypergraph $H$ is a subgraph of $H'$, denoted as $H \subseteq H'$ if $V_H \subseteq V_{H'}$ and $E_H \subseteq E_{H'}$.

Hypergraphs are generated by replacement operations. Let $H \in \mathcal{H}_C$ and $B \subseteq E_H$ and let $\textit{repl} \colon B \to \mathcal{H}_C$ be a mapping with $\textit{type}(\textit{repl}(e)) = \textit{type}(e)$ for each $e \in B$. Then the \textit{replacement} of the hyperedges in $B$ with respect to $\textit{repl}(e)$ is defined by the operations:
remove the subset $B$ of hyperedges from $E_H$; for each $e \in B$, disjointly add the vertices and the hyperedges of $\textit{repl}(e)$; for each $e \in B$ and $1 \leq i \leq \textit{type}(e)$, fuse the $i$th external node $\textit{ext}_{\textit{repl}(e),i}$ with the $i$th attachment node $\textit{att}_{B}(e)_i$.

We denote the resulting hypergraph by $H[e_1 / R_1, \dots ,e_n / R_n]$, where $B = \{e_1, \dots ,e_n\}$ and $\textit{repl}(e_i) = R_i$ for $1 \leq i \leq n$, or $H[\textit{repl}]$. The replacement preserves the external nodes, thus $ext_{H[\textit{repl}]} = ext_H$.

Given the subsets $\Sigma, N \subseteq C$ used as terminal and non-terminal labels, with $\Sigma \cap N = \emptyset$, we denote $E^{\Sigma}_H$ and $E^N_H$ respectively the subsets of terminal and non-terminal hyperedges of $H$.

The replacements applied during the generation are defined in productions: $p = (A,R)$ is a \textit{production} over $N$, where $\textit{lhs}(p) = A \in N$ is the label of the replaced hyperedge and $\textit{rhs}(p) = R \in \mathcal{H}_C$ is a hypergraph with $\textit{type}(R) = \textit{type}(A)$.
If $|ext_R| = |V_R|$ and $E_R = \emptyset$, then $p$ is said to be \textit{empty}.

Let $H \in \mathcal{H}_C$ and let $p = (\textit{lab}(e),R)$, with $e \in E_H$, then a \textit{direct derivation} $H \Rightarrow_p H'$ is obtained by the replacement $H' = H[e/R]$.

A sequence $d$ of direct derivations $H_0 \Rightarrow_{p_1} \dots \Rightarrow_{p_k} H_k$ of length $k$ with $(p_1, \dots ,p_k) \in P$ is denoted as $H \Rightarrow^k H_k$ or $H \Rightarrow^*_P H_k$ if the length is not relevant. We denote it as $H \Rightarrow^* H_k$ if the sequence is clear from the context.

A derivation $H \Rightarrow^* H'$ of length $0$ is given if $H \cong H'$.

Given an ordered set $\{\alpha_1, \ldots, \alpha_n\}$ where $a_i < a_j$ if $i < j \in \mathbb{N}$ we define a \textit{hyperedge replacement grammar}, or \textit{HRG} as a tuple $G = (N,\Sigma,P,S,(mark_{p})_{p \in P})$ where $N \subseteq C$ is a finite set of non-terminal labels, $\Sigma \subseteq C$ is a finite set of terminal labels with $N \cap \Sigma = \emptyset$, $P$ is a finite set of productions, $S \in N$ is the starting symbol, $(mark_{p})_{p \in P}$ is a family of functions $mark_p: E_R \to \{\alpha_1, \ldots, \alpha_n\}$ assigning a mark to each hyperedge in the right-hand side of a production $p$ (Figure \ref{fig:tggrammar}). For each pair $e_i, e_j \in E_R$ with $i \neq j$, $\textit{mark}(e_i) \neq \textit{mark}(e_j)$.

We denote as $P^A \subseteq P$ the subset of productions where $\textit{lhs}(p) = A$.
We call a production $p = (A,R) \in P_N \subseteq P$ \textit{non-terminal} if $E^N_R \neq \emptyset$ or \textit{terminal} if $p = (A,R) \in P_{\Sigma} = P \backslash P_N$.

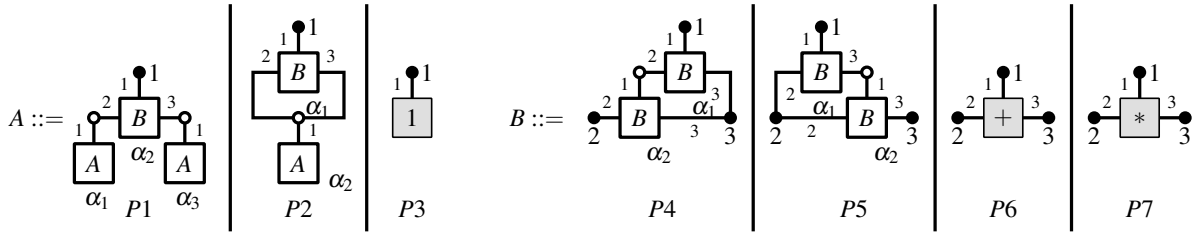
\begin{figure}[htpb]
\centering
\resizebox{\textwidth}{!}{%
\begin{tikzpicture}
 \node[text width = 1.0cm, align=center] at (-0.175, 0.0) {$A::=$} ;
 
 \begin{scope} [local bounding box = c1, shift = {(1.4, 0.0)}]
  \draw [ue] (0.0, 0.3) -- (0.0, 0.7) node[el, midway, left] {1} ;
  \draw [ue] (-0.3, 0.0) -- (-0.7, 0.0) node[el, midway, above] {2} ;
  \draw [ue] (0.3, 0.0) -- (0.7, 0.0) node[el, midway, above] {3} ;
  \draw [ue] (-0.7, -0.4) -- (-0.7, 0.0) node[el, midway, left] {1} ;
  \draw [ue] (0.7, -0.4) -- (0.7, 0.0) node[el, midway, right] {1} ;
  \draw [hl] (0.3, 0.3) rectangle (-0.3, -0.3) node[midway] {$B$} node[below right] {$\alpha_2$} ;
  \draw [hl] (1.0, -0.4) rectangle (0.4, -1.0) node[midway] {$A$} node[below right] {$\alpha_3$} ;
  \draw [hl] (-0.4, -0.4) rectangle (-1.0, -1.0) node[midway] {$A$} node[below right] {$\alpha_1$} ;
  \path (0.0, 0.7) pic{ev} node[right] {1} ;
  \path (-0.7, 0.0) pic{iv} ;
  \path (0.7, 0.0) pic{iv} ;
  \node [pn] at (0.0, -1.4) {$P1$} ;
 \end{scope}
 
 \draw[black, line width=0.5mm] (2.8, 1.75) -- (2.8, -1.75) ;
 
 \begin{scope} [local bounding box = c1, shift = {(3.85, 0.0)}]
  \draw [ue] (0.0, 1.0) -- (0.0, 1.4) node[el, midway, left] {1} ;
  \draw [ue] (-0.3, 0.7) -- (-0.7, 0.7) node[el, midway, above] {2} -- (-0.7, 0.0) -- (0.0, 0.0) ;
  \draw [ue] (0.3, 0.7) -- (0.7, 0.7) node[el, midway, above] {3} -- (0.7, 0.0) -- (0.0, 0.0) ;
  \draw [ue] (0.0, -0.4) -- (0.0, 0.0) node[el, midway, right] {1} ;
  \draw [hl] (-0.3, 1.0) rectangle (0.3, 0.4) node[midway] {$B$} node[below] {$\alpha_1$} ;
  \draw [hl] (-0.3, -0.4) rectangle (0.3, -1.0) node[midway] {$A$} node[right] {$\alpha_2$} ;
  \path (0.0, 1.4) pic{ev} node[right] {1} ;
  \path (0.0, 0.0) pic{iv} ;
  \node [pn] at (0.0, -1.4) {$P2$} ;
 \end{scope}
 
 \draw[black, line width=0.5mm] (4.9, 1.75) -- (4.9, -1.75) ;
 
 \begin{scope} [local bounding box = c1, shift = {(5.6, 0.0)}]
  \draw [ue] (0.0, 0.3) -- (0.0, 0.7) node[el, midway, left] {1} ;
  \draw [ht] (-0.3, 0.3) rectangle (0.3, -0.3) node[midway] {$1$} ;
  \path (0.0, 0.7) pic{ev} node[right] {1} ;
  \node [pn] at (0.0, -1.4) {$P3$} ;
 \end{scope}
 
 \node[text width = 1.0cm, align=center] at (7.525, 0.0) {$B::=$} ;
 
 \begin{scope} [local bounding box = c1, shift = {(9.8, 0.0)}]
  \draw [ue] (0.0, 1.0) -- (0.0, 1.4) node[el, midway, left] {1} ;
  \draw [ue] (-0.3, 0.7) -- (-0.7, 0.7) node[el, midway, above] {2} ;
  \draw [ue] (0.3, 0.7) -- (0.7, 0.7) -- (0.7, 0.0) node[el, midway, left] {3} ;
  \draw [ue] (-0.7, 0.3) -- (-0.7, 0.7) node[el, midway, left] {1} ;
  \draw [ue] (-1.0, 0.0) -- (-1.4, 0.0) node[el, midway, above] {2} ;
  \draw [ue] (-0.4, 0.0) -- (0.7, 0.0) node[el, midway, below] {3} ;
  \draw [hl] (-0.3, 1.0) rectangle (0.3, 0.4) node[midway] {$B$} node[below] {$\alpha_1$} ;
  \draw [hl] (-1.0, 0.3) rectangle (-0.4, -0.3) node[midway] {$B$} node[below] {$\alpha_2$} ;
  \path (0.0, 1.4) pic{ev} node[right] {1} ;
  \path (-1.4, 0.0) pic{ev} node[below] {2} ;
  \path (0.7, 0.0) pic{ev} node[below] {3} ;
  \path (-0.7, 0.7) pic{iv} ;
  \node [pn] at (-0.35, -1.4) {$P4$} ;
 \end{scope}
 
 \draw[black, line width=0.5mm] (10.85, 1.75) -- (10.85, -1.75) ;
 
 \begin{scope} [local bounding box = c1, shift = {(11.9, 0.0)}]
  \draw [ue] (0.0, 1.0) -- (0.0, 1.4) node[el, midway, left] {1} ;
  \draw [ue] (-0.3, 0.7) -- (-0.7, 0.7) -- (-0.7, 0.0) node[el, midway, right] {2} ;
  \draw [ue] (0.3, 0.7) -- (0.7, 0.7) node[el, midway, above] {3} ;
  \draw [ue] (0.7, 0.3) -- (0.7, 0.7) node[el, midway, right] {1} ;
  \draw [ue] (0.4, 0.0) -- (-0.7, 0.0) node[el, midway, below] {2} ;
  \draw [ue] (1.0, 0.0) -- (1.4, 0.0) node[el, midway, above] {3} ;
  \draw [hl] (0.3, 1.0) rectangle (-0.3, 0.4) node[midway] {$B$} node[below right] {$\alpha_1$} ;
  \draw [hl] (0.4, 0.3) rectangle (1.0, -0.3) node[midway] {$B$} node[below] {$\alpha_2$} ;
  \path (0.0, 1.4) pic{ev} node[right] {1} ;
  \path (-0.7, 0.0) pic{ev} node[below] {2} ;
  \path (1.4, 0.0) pic{ev} node[below] {3} ;
  \path (0.7, 0.7) pic{iv} ;
  \node [pn] at (0.5, -1.4) {$P5$} ;
 \end{scope}
 
 \draw[black, line width=0.5mm] (13.65, 1.75) -- (13.65, -1.75) ;
 
 \begin{scope} [local bounding box = c1, shift = {(14.7, 0.0)}]
  \draw [ue] (0.0, 0.3) -- (0.0, 0.7) node[el, midway, left] {1} ;
  \draw [ue] (-0.3, 0.0) -- (-0.7, 0.0) node[el, midway, above] {2} ;
  \draw [ue] (0.3, 0.0) -- (0.7, 0.0) node[el, midway, above] {3} ;
  \draw [ht] (-0.3, 0.3) rectangle (0.3, -0.3) node[midway] {$+$} ;
  \path (0.0, 0.7) pic{ev} node[right] {1} ;
  \path (-0.7, 0.0) pic{ev} node[below] {2} ;
  \path (0.7, 0.0) pic{ev} node[below] {3} ;
  \node [pn] at (0.0, -1.4) {$P6$} ;
 \end{scope}
 
 \draw[black, line width=0.5mm] (15.75, 1.75) -- (15.75, -1.75) ;
 
 \begin{scope} [local bounding box = c1, shift = {(16.8, 0.0)}]
  \draw [ue] (0.0, 0.3) -- (0.0, 0.7) node[el, midway, left] {1} ;
  \draw [ue] (-0.3, 0.0) -- (-0.7, 0.0) node[el, midway, above] {2} ;
  \draw [ue] (0.3, 0.0) -- (0.7, 0.0) node[el, midway, above] {3} ;
  \draw [ht] (-0.3, 0.3) rectangle (0.3, -0.3) node[midway] {$*$} ;
  \path (0.0, 0.7) pic{ev} node[right] {1} ;
  \path (-0.7, 0.0) pic{ev} node[below] {2} ;
  \path (0.7, 0.0) pic{ev} node[below] {3} ;
  \node [pn] at (0.0, -1.4) {$P7$} ;
 \end{scope}
\end{tikzpicture}%
}
\caption{An ambiguous hyperedge replacement grammar for term graphs}
\label{fig:tggrammar}
\end{figure}

The marking of the hyperedges in the $\textit{rhs}$ of each production, represents the order in which the replacements are carried out $(\alpha_1, \alpha_2, \ldots, \alpha_{n-1}, \alpha_n)$. It allows for the definitions of ordered derivation tree and leftmost derivation.

Given a set of productions $P$, we denote by $T_P$ the set of all ordered trees over $P$ which is inductively defined as follows: for each $p \in P$, $p \in T_P$; for $t_1, \ldots, t_n \in T_P$ and $p \in P$, $p(t_1, \ldots, t_n) \in T_P$.

Then, given an $\textit{HRG}$ $G$, an \textit{ordered derivation tree} $t$ for $e$ such that $\textit{lab}(e) = X \in N$, is a tree $p(t_{\alpha_1}, \ldots, t_{\alpha_n})$ in $T_P$, such that $p = (X, R)$ is a production in $P$, and $t_{\alpha_1}, \ldots, t_{\alpha_n}$ are derivation trees for $e_1 \ldots e_n$, such that $X_1 \ldots X_n$ are the labels of the non-terminal hyperedges in $R$ marked with $\alpha_1 \ldots \alpha_n$, respectively (Figure \ref{fig:tree}).

We define the \textit{yield} of an ordered derivation tree $t$, denoted with $yield(t)$, as the sequence of replacements: $\textit{yield}(p(t_{\alpha_1},\ldots,t_{\alpha_n}))=\textit{rhs}(p)[e_1 / \textit{yield}(t_{\alpha_1}),\ldots,e_n / \textit{yield}(t_{\alpha_n})]$.

Let $t$ be an ordered derivation tree for a hypergraph $H$ obtained from a derivation $d = S^{\bullet} \Rightarrow^*_P H$ and $\textit{trav}(t)$ its pre-ordered visit. Then $d$ is said to be a \textit{leftmost derivation}, denoted as $\textit{lmd}(H)$, if and only if the order of the applied productions of $d$ corresponds to $\textit{trav}(t)$.

Since we need a measure to ensure the termination of the proposed algorithm, we define an $\textit{HRG}$ to be \textit{non-contracting} if for each direct derivation $H \Rightarrow_p H'$, $|H| \leq |H'|$. We call a grammar \textit{essentially non-contracting} if there exists $p = (S,R) \in P$ such that $p$ is the empty production.

The \textit{hyperedge replacement language} ($\textit{HRL}$) generated by an $\textit{HRG}$ is the set $L(G) = \{H \in \mathcal{H}_{\Sigma} \mid S^{\bullet} \Rightarrow^*_P H\}$. We define for each $A \in N$, $L^{A}(G) = \{H \in \mathcal{H}_{\Sigma} \mid A^{\bullet} \Rightarrow^*_P H\}$. We also define for $n \in \mathbb{N}$, $L^{A}_{n}(G) = \{H \in \mathcal{H}_{\Sigma} \mid A^{\bullet} \Rightarrow^*_P H \wedge |H| = n\}$. Clearly $L^{A}_{n}(G) \subseteq L^{A}(G)$.
We denote as $|L^A_n|$ the size of the set of all size-$n$-hypergraphs in $L$ that can be derived from $A^{\bullet}$.

For example, the hyperedge replacement grammar in Figure \ref{fig:tggrammar} generates the class of all term graphs with function symbols in $\{*, +, 1\}$. Note that hyperedges with non-terminal labels are depicted as white boxes. A derivation with the Chomsky normal form version of this grammar is given in Figure \ref{fig:termsgraphderivation}.

We define a grammar $G$ to be \textit{ambiguous} if there are ordered derivation trees $t_1, t_2 \in T_P$, such that $t_1 \neq t_2$ and $yield(t_1) \cong yield(t_2)$, or equivalently, if there exist $H, H' \in L(G)$ such that $H \cong H'$ and $\textit{lmd}(H) \neq \textit{lmd}(H')$. If $yield(t_1),yield(t_2) \in L_n(G)$ we say that $G$ is \textit{n-ambiguous}. A non-ambiguous version of the term graphs grammar is given in Figure \ref{fig:natggrammar}.

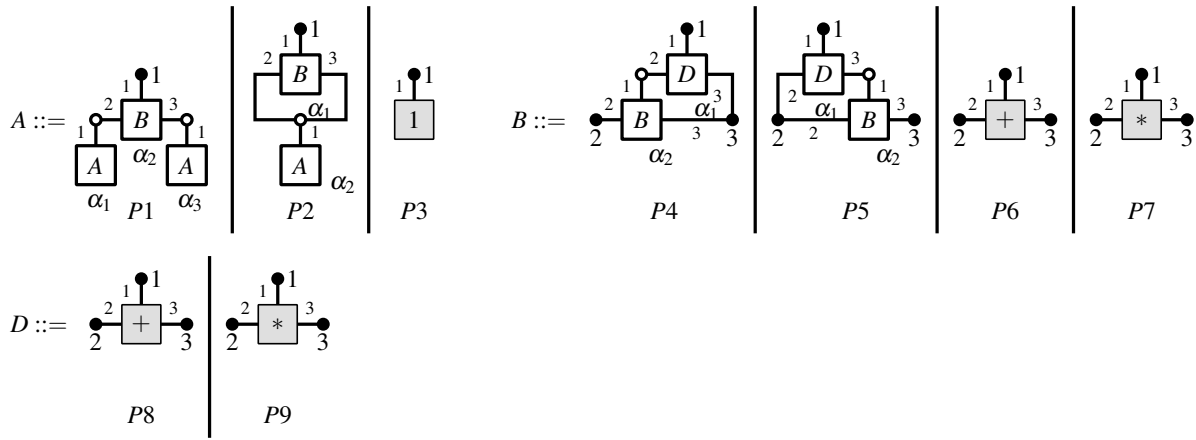
\begin{figure}[htpb]
\centering
\resizebox{\textwidth}{!}{%
\begin{tikzpicture}
 \node[text width = 1.0cm, align=center] at (-0.175, 0.0) {$A::=$} ;
 
 \begin{scope} [local bounding box = c1, shift = {(1.4, 0.0)}]
  \draw [ue] (0.0, 0.3) -- (0.0, 0.7) node[el, midway, left] {1} ;
  \draw [ue] (-0.3, 0.0) -- (-0.7, 0.0) node[el, midway, above] {2} ;
  \draw [ue] (0.3, 0.0) -- (0.7, 0.0) node[el, midway, above] {3} ;
  \draw [ue] (-0.7, -0.4) -- (-0.7, 0.0) node[el, midway, left] {1} ;
  \draw [ue] (0.7, -0.4) -- (0.7, 0.0) node[el, midway, right] {1} ;
  \draw [hl] (0.3, 0.3) rectangle (-0.3, -0.3) node[midway] {$B$} node[below right] {$\alpha_2$} ;
  \draw [hl] (1.0, -0.4) rectangle (0.4, -1.0) node[midway] {$A$} node[below right] {$\alpha_3$} ;
  \draw [hl] (-0.4, -0.4) rectangle (-1.0, -1.0) node[midway] {$A$} node[below right] {$\alpha_1$} ;
  \path (0.0, 0.7) pic{ev} node[right] {1} ;
  \path (-0.7, 0.0) pic{iv} ;
  \path (0.7, 0.0) pic{iv} ;
  \node [pn] at (0.0, -1.4) {$P1$} ;
 \end{scope}
 
 \draw[black, line width=0.5mm] (2.8, 1.75) -- (2.8, -1.75) ;
 
 \begin{scope} [local bounding box = c1, shift = {(3.85, 0.0)}]
  \draw [ue] (0.0, 1.0) -- (0.0, 1.4) node[el, midway, left] {1} ;
  \draw [ue] (-0.3, 0.7) -- (-0.7, 0.7) node[el, midway, above] {2} -- (-0.7, 0.0) -- (0.0, 0.0) ;
  \draw [ue] (0.3, 0.7) -- (0.7, 0.7) node[el, midway, above] {3} -- (0.7, 0.0) -- (0.0, 0.0) ;
  \draw [ue] (0.0, -0.4) -- (0.0, 0.0) node[el, midway, right] {1} ;
  \draw [hl] (-0.3, 1.0) rectangle (0.3, 0.4) node[midway] {$B$} node[below] {$\alpha_1$} ;
  \draw [hl] (-0.3, -0.4) rectangle (0.3, -1.0) node[midway] {$A$} node[right] {$\alpha_2$} ;
  \path (0.0, 1.4) pic{ev} node[right] {1} ;
  \path (0.0, 0.0) pic{iv} ;
  \node [pn] at (0.0, -1.4) {$P2$} ;
 \end{scope}
 
 \draw[black, line width=0.5mm] (4.9, 1.75) -- (4.9, -1.75) ;
 
 \begin{scope} [local bounding box = c1, shift = {(5.6, 0.0)}]
  \draw [ue] (0.0, 0.3) -- (0.0, 0.7) node[el, midway, left] {1} ;
  \draw [ht] (-0.3, 0.3) rectangle (0.3, -0.3) node[midway] {$1$} ;
  \path (0.0, 0.7) pic{ev} node[right] {1} ;
  \node [pn] at (0.0, -1.4) {$P3$} ;
 \end{scope}
 
 \node[text width = 1.0cm, align=center] at (7.525, 0.0) {$B::=$} ;
 
 \begin{scope} [local bounding box = c1, shift = {(9.8, 0.0)}]
  \draw [ue] (0.0, 1.0) -- (0.0, 1.4) node[el, midway, left] {1} ;
  \draw [ue] (-0.3, 0.7) -- (-0.7, 0.7) node[el, midway, above] {2} ;
  \draw [ue] (0.3, 0.7) -- (0.7, 0.7) -- (0.7, 0.0) node[el, midway, left] {3} ;
  \draw [ue] (-0.7, 0.3) -- (-0.7, 0.7) node[el, midway, left] {1} ;
  \draw [ue] (-1.0, 0.0) -- (-1.4, 0.0) node[el, midway, above] {2} ;
  \draw [ue] (-0.4, 0.0) -- (0.7, 0.0) node[el, midway, below] {3} ;
  \draw [hl] (-0.3, 1.0) rectangle (0.3, 0.4) node[midway] {$D$} node[below] {$\alpha_1$} ;
  \draw [hl] (-1.0, 0.3) rectangle (-0.4, -0.3) node[midway] {$B$} node[below] {$\alpha_2$} ;
  \path (0.0, 1.4) pic{ev} node[right] {1} ;
  \path (-1.4, 0.0) pic{ev} node[below] {2} ;
  \path (0.7, 0.0) pic{ev} node[below] {3} ;
  \path (-0.7, 0.7) pic{iv} ;
  \node [pn] at (-0.35, -1.4) {$P4$} ;
 \end{scope}
 
 \draw[black, line width=0.5mm] (10.85, 1.75) -- (10.85, -1.75) ;
 
 \begin{scope} [local bounding box = c1, shift = {(11.9, 0.0)}]
  \draw [ue] (0.0, 1.0) -- (0.0, 1.4) node[el, midway, left] {1} ;
  \draw [ue] (-0.3, 0.7) -- (-0.7, 0.7) -- (-0.7, 0.0) node[el, midway, right] {2} ;
  \draw [ue] (0.3, 0.7) -- (0.7, 0.7) node[el, midway, above] {3} ;
  \draw [ue] (0.7, 0.3) -- (0.7, 0.7) node[el, midway, right] {1} ;
  \draw [ue] (0.4, 0.0) -- (-0.7, 0.0) node[el, midway, below] {2} ;
  \draw [ue] (1.0, 0.0) -- (1.4, 0.0) node[el, midway, above] {3} ;
  \draw [hl] (0.3, 1.0) rectangle (-0.3, 0.4) node[midway] {$D$} node[below right] {$\alpha_1$} ;
  \draw [hl] (0.4, 0.3) rectangle (1.0, -0.3) node[midway] {$B$} node[below] {$\alpha_2$} ;
  \path (0.0, 1.4) pic{ev} node[right] {1} ;
  \path (-0.7, 0.0) pic{ev} node[below] {2} ;
  \path (1.4, 0.0) pic{ev} node[below] {3} ;
  \path (0.7, 0.7) pic{iv} ;
  \node [pn] at (0.5, -1.4) {$P5$} ;
 \end{scope}
 
 \draw[black, line width=0.5mm] (13.65, 1.75) -- (13.65, -1.75) ;
 
 \begin{scope} [local bounding box = c1, shift = {(14.7, 0.0)}]
  \draw [ue] (0.0, 0.3) -- (0.0, 0.7) node[el, midway, left] {1} ;
  \draw [ue] (-0.3, 0.0) -- (-0.7, 0.0) node[el, midway, above] {2} ;
  \draw [ue] (0.3, 0.0) -- (0.7, 0.0) node[el, midway, above] {3} ;
  \draw [ht] (-0.3, 0.3) rectangle (0.3, -0.3) node[midway] {$+$} ;
  \path (0.0, 0.7) pic{ev} node[right] {1} ;
  \path (-0.7, 0.0) pic{ev} node[below] {2} ;
  \path (0.7, 0.0) pic{ev} node[below] {3} ;
  \node [pn] at (0.0, -1.4) {$P6$} ;
 \end{scope}
 
 \draw[black, line width=0.5mm] (15.75, 1.75) -- (15.75, -1.75) ;
 
 \begin{scope} [local bounding box = c1, shift = {(16.8, 0.0)}]
  \draw [ue] (0.0, 0.3) -- (0.0, 0.7) node[el, midway, left] {1} ;
  \draw [ue] (-0.3, 0.0) -- (-0.7, 0.0) node[el, midway, above] {2} ;
  \draw [ue] (0.3, 0.0) -- (0.7, 0.0) node[el, midway, above] {3} ;
  \draw [ht] (-0.3, 0.3) rectangle (0.3, -0.3) node[midway] {$*$} ;
  \path (0.0, 0.7) pic{ev} node[right] {1} ;
  \path (-0.7, 0.0) pic{ev} node[below] {2} ;
  \path (0.7, 0.0) pic{ev} node[below] {3} ;
  \node [pn] at (0.0, -1.4) {$P7$} ;
 \end{scope}
 
 \node[text width = 1.0cm, align=center] at (-0.175, -3.15) {$D::=$} ;
 
 \begin{scope} [local bounding box = c1, shift = {(1.4, -3.15)}]
  \draw [ue] (0.0, 0.3) -- (0.0, 0.7) node[el, midway, left] {1} ;
  \draw [ue] (-0.3, 0.0) -- (-0.7, 0.0) node[el, midway, above] {2} ;
  \draw [ue] (0.3, 0.0) -- (0.7, 0.0) node[el, midway, above] {3} ;
  \draw [ht] (-0.3, 0.3) rectangle (0.3, -0.3) node[midway] {$+$} ;
  \path (0.0, 0.7) pic{ev} node[right] {1} ;
  \path (-0.7, 0.0) pic{ev} node[below] {2} ;
  \path (0.7, 0.0) pic{ev} node[below] {3} ;
  \node [pn] at (0.0, -1.4) {$P8$} ;
 \end{scope}
 
 \draw[black, line width=0.5mm] (2.45, -2.1) -- (2.45, -4.9) ;
 
 \begin{scope} [local bounding box = c1, shift = {(3.5, -3.15)}]
  \draw [ue] (0.0, 0.3) -- (0.0, 0.7) node[el, midway, left] {1} ;
  \draw [ue] (-0.3, 0.0) -- (-0.7, 0.0) node[el, midway, above] {2} ;
  \draw [ue] (0.3, 0.0) -- (0.7, 0.0) node[el, midway, above] {3} ;
  \draw [ht] (-0.3, 0.3) rectangle (0.3, -0.3) node[midway] {$*$} ;
  \path (0.0, 0.7) pic{ev} node[right] {1} ;
  \path (-0.7, 0.0) pic{ev} node[below] {2} ;
  \path (0.7, 0.0) pic{ev} node[below] {3} ;
  \node [pn] at (0.0, -1.4) {$P9$} ;
 \end{scope}
\end{tikzpicture}%
}
\caption{A non-ambiguous hyperedge replacement grammar for term graphs}
\label{fig:natggrammar}
\end{figure}

\section{Random hypergraph generation}
\label{sec:randomgeneration}

In 1994, Mairson proposed a pair of methods for the sampling of strings from context-free grammars \cite{mairson-1994-gwi} . His approach requires, as input, a grammar $G$ in Chomsky normal form and the length $n$ of the word to be generated. He proves that, if $G$ is non-ambiguous, such a word is generated uniformly at random. The first method has a time complexity of $O(n^2)$ while requiring $O(n)$ space, and vice versa, the second method runs in linear time using quadratic space. In the following we adapt the first of Mairson's methods to hyperedge replacement grammars. We use our running example of a term graph language to illustrate the generation process.

We define a \textit{Chomsky normal form} ($\textit{CNF}$) for hyperedge replacement grammars as a tuple $G_{\textit{CNF}} = (N,\Sigma,P,S,(mark_{p})_{p \in P})$ where:\begin{itemize}
\setlength\itemsep{-0.2em}
\item $N \subseteq C$ is a finite set of non-terminal labels
\item $\Sigma \subseteq C$ is a finite set of terminal labels with $N \cap \Sigma = \emptyset$
\item $P$ is a finite set of productions
\item $S \in N$ is the starting symbol
\item $(mark_{p})_{p \in P}$ is a family of functions $mark_p: E_R \to \{\alpha, \beta\}$ assigning a mark to each hyperedge in the right-hand side of a production $p$
\end{itemize}
Each production $p = (A,R) \in P$ satisfies one of the following constraints:
\begin{itemize}
\setlength\itemsep{-0.2em}
\item $E_R = \{e_1, e_2\}$ where $\textit{lab}(e_1), \textit{lab}(e_2) \in N$ and $\textit{mark}(e_1) \neq \textit{mark}(e_2)$, in which case the replacement is firstly carried out on the hyperedge marked with $\alpha$, then on the one marked with $\beta$
\item $E_R = \{e_1\}$ where $lab(e_1) \in \Sigma$ and $mark(e_1) = \alpha$
\item $E_R = \emptyset$, $|V_R| > |\textit{ext}_R|$
\item $A = S$, $p$ is the empty production and for each $q \in P$, for each $e \in \textit{rhs}(q)$, $\textit{lab}(e) \neq S$
\end{itemize}
Note that in the first two cases, $\textit{rhs}(p)$ contains either exactly two non-terminal hyperedges or a single terminal hyperedge and may also contain isolated nodes. Productions according to the third case are considered as \textit{terminal} productions. The last case specifies that the empty production is only allowed if there is no other production having the starting symbol in its right-hand side. The grammar in Figure \ref{fig:cnftggrammar} is the $\textit{CNF}$ version of the term graph grammar in Figure \ref{fig:tggrammar}.

\begin{figure}[htpb]
\centering
\resizebox{15cm}{!}{%
\begin{tikzpicture}
 \node[text width = 1.0cm, align=center] at (-0.175, 0.0) {$A::=$} ;
 
 \begin{scope} [local bounding box = c1, shift = {(1.4, 0.0)}]
  \draw [ue] (0.0, 0.3) -- (0.0, 0.7) node[el, midway, left] {1} ;
  \draw [ue] (-0.3, 0.0) -- (-0.7, 0.0) node[el, midway, above] {2} ;
  \draw [ue] (0.3, 0.0) -- (0.7, 0.0) node[el, midway, above] {3} ;
  \draw [ue] (-0.7, -0.4) -- (-0.7, 0.0) node[el, midway, left] {1} ;
  \draw [hl] (-0.3, 0.3) rectangle (0.3, -0.3) node[midway] {$C$} node[below] {$\alpha$} ;
  \draw [hl] (-1.0, -0.4) rectangle (-0.4, -1.0) node[midway] {$A$} node[right] {$\beta$} ;
  \path (0.0, 0.7) pic{ev} node[right] {1} ;
  \path (-0.7, 0.0) pic{iv} ;
  \path (0.7, 0.0) pic{iv} ;
  \node [pn] at (-0.35, -1.4) {$P1$} ;
 \end{scope}
 
 \draw[black, line width=0.5mm] (2.8, 1.75) -- (2.8, -1.75) ;
 
 \begin{scope} [local bounding box = c1, shift = {(3.85, 0.0)}]
  \draw [ue] (0.0, 1.0) -- (0.0, 1.4) node[el, midway, left] {1} ;
  \draw [ue] (-0.3, 0.7) -- (-0.7, 0.7) node[el, midway, above] {2} -- (-0.7, 0.0) -- (0.0, 0.0) ;
  \draw [ue] (0.3, 0.7) -- (0.7, 0.7) node[el, midway, above] {3} -- (0.7, 0.0) -- (0.0, 0.0) ;
  \draw [ue] (0.0, -0.4) -- (0.0, 0.0) node[el, midway, right] {1} ;
  \draw [hl] (-0.3, 1.0) rectangle (0.3, 0.4) node[midway] {$B$} node[below] {$\alpha$} ;
  \draw [hl] (-0.3, -0.4) rectangle (0.3, -1.0) node[midway] {$A$} node[right] {$\beta$} ;
  \path (0.0, 1.4) pic{ev} node[right] {1} ;
  \path (0.0, 0.0) pic{iv} ;
  \node [pn] at (0.0, -1.4) {$P2$} ;
 \end{scope}
 
 \draw[black, line width=0.5mm] (4.9, 1.75) -- (4.9, -1.75) ;
 
 \begin{scope} [local bounding box = c1, shift = {(5.6, 0.0)}]
  \draw [ue] (0.0, 0.3) -- (0.0, 0.7) node[el, midway, left] {1} ;
  \draw [ht] (-0.3, 0.3) rectangle (0.3, -0.3) node[midway] {$1$} ;
  \path (0.0, 0.7) pic{ev} node[right] {1} ;
  \node [pn] at (0.0, -1.4) {$P3$} ;
 \end{scope}
 
 \node[text width = 1.0cm, align=center] at (7.525, 0.0) {$B::=$} ;
 
 \begin{scope} [local bounding box = c1, shift = {(9.8, 0.0)}]
  \draw [ue] (0.0, 1.0) -- (0.0, 1.4) node[el, midway, left] {1} ;
  \draw [ue] (-0.3, 0.7) -- (-0.7, 0.7) node[el, midway, above] {2} ;
  \draw [ue] (0.3, 0.7) -- (0.7, 0.7) node[el, midway, above] {3} -- (0.7, 0.0) ;
  \draw [ue] (-0.7, 0.3) -- (-0.7, 0.7) node[el, midway, left] {1} ;
  \draw [ue] (-1.0, 0.0) -- (-1.4, 0.0) node[el, midway, above] {2} ;
  \draw [ue] (-0.4, 0.0) -- (0.7, 0.0) node[el, midway, below] {3} ;
  \draw [hl] (-0.3, 1.0) rectangle (0.3, 0.4) node[midway] {$D$} node[below] {$\alpha$} ;
  \draw [hl] (-1.0, 0.3) rectangle (-0.4, -0.3) node[midway] {$B$} node[below] {$\beta$} ;
  \path (0.0, 1.4) pic{ev} node[right] {1} ;
  \path (-1.4, 0.0) pic{ev} node[below] {2} ;
  \path (0.7, 0.0) pic{ev} node[below] {3} ;
  \path (-0.7, 0.7) pic{iv} ;
  \node [pn] at (-0.35, -1.4) {$P4$} ;
 \end{scope}
 
 \draw[black, line width=0.5mm] (10.85, 1.75) -- (10.85, -1.75) ;
 
 \begin{scope} [local bounding box = c1, shift = {(11.9, 0.0)}]
  \draw [ue] (0.0, 1.0) -- (0.0, 1.4) node[el, midway, left] {1} ;
  \draw [ue] (-0.3, 0.7) -- (-0.7, 0.7) node[el, midway, above] {2} -- (-0.7, 0.0) ;
  \draw [ue] (0.3, 0.7) -- (0.7, 0.7) node[el, midway, above] {3} ;
  \draw [ue] (0.7, 0.3) -- (0.7, 0.7) node[el, midway, right] {1} ;
  \draw [ue] (0.4, 0.0) -- (-0.7, 0.0) node[el, midway, below] {2} ;
  \draw [ue] (1.0, 0.0) -- (1.4, 0.0) node[el, midway, above] {3} ;
  \draw [hl] (0.3, 1.0) rectangle (-0.3, 0.4) node[midway] {$D$} node[below] {$\alpha$} ;
  \draw [hl] (0.4, 0.3) rectangle (1.0, -0.3) node[midway] {$B$} node[below] {$\beta$} ;
  \path (0.0, 1.4) pic{ev} node[right] {1} ;
  \path (-0.7, 0.0) pic{ev} node[below] {2} ;
  \path (1.4, 0.0) pic{ev} node[below] {3} ;
  \path (0.7, 0.7) pic{iv} ;
  \node [pn] at (0.5, -1.4) {$P5$} ;
 \end{scope}
 
 \draw[black, line width=0.5mm] (13.65, 1.75) -- (13.65, -1.75) ;
 
 \begin{scope} [local bounding box = c1, shift = {(14.7, 0.0)}]
  \draw [ue] (0.0, 0.3) -- (0.0, 0.7) node[el, midway, left] {1} ;
  \draw [ue] (-0.3, 0.0) -- (-0.7, 0.0) node[el, midway, above] {2} ;
  \draw [ue] (0.3, 0.0) -- (0.7, 0.0) node[el, midway, above] {3} ;
  \draw [ht] (-0.3, 0.3) rectangle (0.3, -0.3) node[midway] {$+$} ;
  \path (0.0, 0.7) pic{ev} node[right] {1} ;
  \path (-0.7, 0.0) pic{ev} node[below] {2} ;
  \path (0.7, 0.0) pic{ev} node[below] {3} ;
  \node [pn] at (0.0, -1.4) {$P6$} ;
 \end{scope}
 
 \draw[black, line width=0.5mm] (15.75, 1.75) -- (15.75, -1.75) ;
 
 \begin{scope} [local bounding box = c1, shift = {(16.8, 0.0)}]
  \draw [ue] (0.0, 0.3) -- (0.0, 0.7) node[el, midway, left] {1} ;
  \draw [ue] (-0.3, 0.0) -- (-0.7, 0.0) node[el, midway, above] {2} ;
  \draw [ue] (0.3, 0.0) -- (0.7, 0.0) node[el, midway, above] {3} ;
  \draw [ht] (-0.3, 0.3) rectangle (0.3, -0.3) node[midway] {$*$} ;
  \path (0.0, 0.7) pic{ev} node[right] {1} ;
  \path (-0.7, 0.0) pic{ev} node[below] {2} ;
  \path (0.7, 0.0) pic{ev} node[below] {3} ;
  \node [pn] at (0.0, -1.4) {$P7$} ;
 \end{scope}
 
 \node[text width = 1.0cm, align=center] at (-0.175, -3.15) {$C::=$} ;
 
 \begin{scope} [local bounding box = c1, shift = {(1.4, -3.15)}]
  \draw [ue] (0.0, 0.3) -- (0.0, 0.7) node[el, midway, left] {1} ;
  \draw [ue] (-0.3, 0.0) -- (-0.7, 0.0) node[el, midway, above] {2} ;
  \draw [ue] (0.3, 0.0) -- (0.7, 0.0) node[el, midway, above] {3} ;
  \draw [ue] (0.7, -0.4) -- (0.7, 0.0) node[el, midway, left] {1} ;
  \draw [hl] (0.3, 0.3) rectangle (-0.3, -0.3) node[midway] {$B$} node[below] {$\alpha$}  ;
  \draw [hl] (0.4, -0.4) rectangle (1.0, -1.0) node[midway] {$A$} node[below] {$\beta$}  ;
  \path (0.0, 0.7) pic{ev} node[right] {1} ;
  \path (-0.7, 0.0) pic{ev} node[below] {2} ;
  \path (0.7, 0.0) pic{ev} node[right] {3} ;
  \node [pn] at (0.0, -1.4) {$P8$} ;
 \end{scope}
 
 \node[text width = 1.0cm, align=center] at (4.025, -3.15) {$D::=$} ;
 
 \begin{scope} [local bounding box = c1, shift = {(5.6, -3.15)}]
  \draw [ue] (0.0, 0.3) -- (0.0, 0.7) node[el, midway, left] {1} ;
  \draw [ue] (-0.3, 0.0) -- (-0.7, 0.0) node[el, midway, above] {2} ;
  \draw [ue] (0.3, 0.0) -- (0.7, 0.0) node[el, midway, above] {3} ;
  \draw [ht] (-0.3, 0.3) rectangle (0.3, -0.3) node[midway] {$+$} ;
  \path (0.0, 0.7) pic{ev} node[right] {1} ;
  \path (-0.7, 0.0) pic{ev} node[below] {2} ;
  \path (0.7, 0.0) pic{ev} node[below] {3} ;
  \node [pn] at (0.0, -1.4) {$P9$} ;
 \end{scope}
 
 \draw[black, line width=0.5mm] (6.65, -2.1) -- (6.65, -4.9) ;
 
 \begin{scope} [local bounding box = c1, shift = {(7.7, -3.15)}]
  \draw [ue] (0.0, 0.3) -- (0.0, 0.7) node[el, midway, left] {1} ;
  \draw [ue] (-0.3, 0.0) -- (-0.7, 0.0) node[el, midway, above] {2} ;
  \draw [ue] (0.3, 0.0) -- (0.7, 0.0) node[el, midway, above] {3} ;
  \draw [ht] (-0.3, 0.3) rectangle (0.3, -0.3) node[midway] {$*$} ;
  \path (0.0, 0.7) pic{ev} node[right] {1} ;
  \path (-0.7, 0.0) pic{ev} node[below] {2} ;
  \path (0.7, 0.0) pic{ev} node[below] {3} ;
  \node [pn] at (0.0, -1.4) {$P10$} ;
 \end{scope}
\end{tikzpicture}%
}
\caption{$\textit{CNF}$ of the grammar in Figure \ref{fig:tggrammar}}
\label{fig:cnftggrammar}
\end{figure}
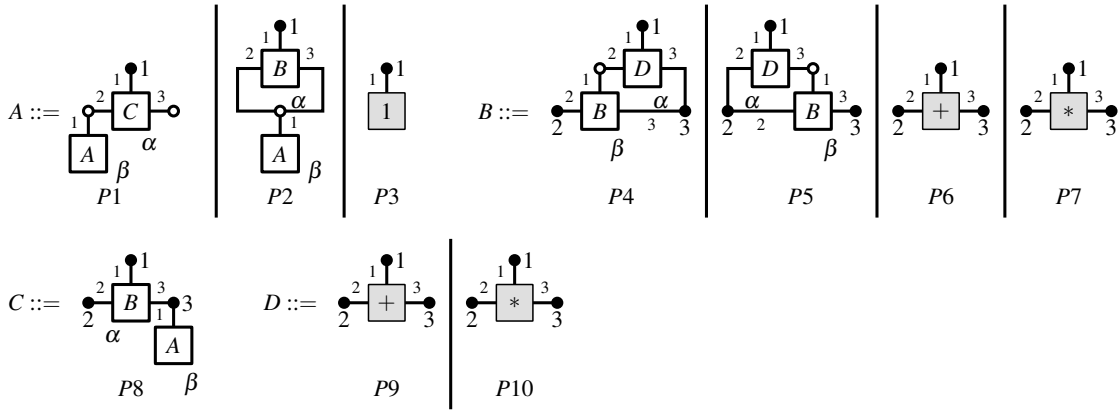

\begin{lemma}\label{lem:cnf}
There exists an algorithm that for every hyperedge replacement grammar $G$ produces a grammar $G'$ in $\textit{CNF}$ such that $L(G) = L(G')$.
\end{lemma}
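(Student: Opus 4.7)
The plan is to adapt the classical Chomsky normal form construction for context-free string grammars (as presented e.g.\ in Hopcroft--Ullman) to the hyperedge replacement setting, performing a sequence of equivalence-preserving transformations that progressively restrict the shape of right-hand sides until every production fits one of the four patterns listed in the definition of CNF.

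Concretely, given $G = (N,\Sigma,P,S,(mark_p)_{p\in P})$, I would proceed in five stages. \emph{Stage 1 (isolate the start symbol).} If $S$ appears on the right-hand side of any production, introduce a fresh non-terminal $S'$ with $\textit{type}(S') = \textit{type}(S)$, add the single production $(S', S^{\bullet})$, and take $S'$ as the new start symbol; this guarantees the fourth CNF clause can be satisfied and no other production mentions $S'$. \emph{Stage 2 (extract terminals from mixed right-hand sides).} For every production $p=(A,R)$ with $|E_R|\geq 2$ and every terminal hyperedge $e\in E^{\Sigma}_R$, replace $e$ in $R$ by a fresh non-terminal hyperedge $e'$ labelled by a new symbol $X_{\textit{lab}(e)}$ with the same attachment sequence, and add the production $(X_{\textit{lab}(e)}, T)$ where $T$ is a handle-like hypergraph with one terminal hyperedge labelled $\textit{lab}(e)$ whose external nodes are its attachment nodes in order. \emph{Stage 3 (binarisation).} For each production $p=(A,R)$ with $|E^N_R|=k>2$, iteratively carve out a subhypergraph $R'\subseteq R$ containing exactly two non-terminal hyperedges, replace $R'$ by a single fresh non-terminal hyperedge whose attachment sequence lists exactly those vertices of $R'$ that are either external in $R$ or shared with some hyperedge of $R\setminus R'$, and add a new production whose right-hand side is $R'$ with these shared vertices declared external. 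Repeating this reduces $|E^N_R|$ by one at each step, terminating with productions having $|E^N_R|\leq 2$. \emph{Stage 4 (eliminate unit-style productions).} Productions with $|E^N_R|=1$ and no terminal hyperedge play the role of unit productions; I would remove them by the standard transitive-closure trick, substituting, for every chain $A_1^{\bullet}\Rightarrow^* A_k^{\bullet}$ of such direct derivations and every non-unit production $(A_k,R)$, the composed production $(A_1,R')$ where $R'$ is obtained from $R$ by identifying external nodes along the chain. \emph{Stage 5 (empty productions).} Productions with $E_R=\emptyset$ and $|V_R|=|\textit{ext}_R|$ are genuine $\varepsilon$-productions and are eliminated by the usual Chomsky trick, propagating the effect into productions that contain the nullable non-terminal; the only empty production that survives is the one at $S$, if needed, permitted by the fourth clause.

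Correctness follows because each stage preserves $L(G)$: Stages 1 and 2 are trivially reversible via single derivation steps, Stage 3 is justified by the associativity of hyperedge replacement (any derivation of the original production can be factored through the carved-out subhypergraph, and conversely), and Stages 4 and 5 are the standard hypergraph analogues of unit- and epsilon-elimination, whose proofs proceed by straightforward induction on derivation length using the substitution lemma for hyperedge replacement. Termination of each stage is immediate from a well-founded measure (e.g.\ multiset ordering on $|E^N_R|$ for Stage 3, and the number of nullable/unit pairs for Stages 4 and 5). The main obstacle will be Stage 3: unlike the string case, where binarisation is essentially concatenation splitting, here one must choose the external nodes of each freshly introduced non-terminal so that the replacement gluing reproduces exactly the adjacency structure of the original right-hand side. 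The correct choice, namely the vertices of $R'$ that are either external in $R$ or incident to hyperedges outside $R'$, is dictated by the definition of hyperedge replacement and needs a short but careful verification that the composite derivation yields an isomorphic hypergraph. Once this is done, the marks $(mark_p)$ can be reassigned so that each new binary right-hand side has one hyperedge marked $\alpha$ and one marked $\beta$, completing the construction.
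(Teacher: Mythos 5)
Your proposal follows essentially the same route as the paper: both adapt the classical string-grammar CNF construction (terminal extraction, binarisation, unit- and $\varepsilon$-elimination) to hyperedge replacement, justifying each local transformation by the substitution and associativity properties of hyperedge replacement. Two of your choices are actually refinements of what the paper does: the explicit start-symbol isolation in Stage~1 (which the paper's construction silently needs in order to satisfy the fourth CNF clause but never performs), and the more economical interface in Stage~3, where you declare external only those vertices of the carved-out part that are external in $R$ or shared with the remainder, whereas the paper takes the union of all attachment sequences of the extracted hyperedges and so introduces fresh non-terminals of larger type. Both interface choices work; yours is the minimal one.

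There is, however, one step that fails as ordered: you run unit-elimination (Stage~4) before $\varepsilon$-elimination (Stage~5). After binarisation every non-terminal right-hand side contains exactly two non-terminal hyperedges, so when Stage~5 propagates a nullable non-terminal into such a production it leaves a right-hand side with a single non-terminal hyperedge --- exactly the unit-style productions Stage~4 was supposed to have removed. The grammar emerging from Stage~5 therefore need not be in CNF. The fix is standard: either swap the two stages (the paper eliminates empty productions first, then unit productions, and only then binarises and extracts terminals), or iterate Stages~4 and~5 to a fixpoint, with termination argued from the finiteness of the nullable and unit closures. With that reordering your argument goes through and coincides with the paper's.
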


\begin{proof}
We present a set of rules to transform any grammar $G$, into an equivalent grammar $G'$ such that, for each direct derivation $H \Rightarrow_{p} H'$ with $p \in P_G$, it exists an equivalent derivation $H \Rightarrow^*_Q H'$ with $Q \subseteq P_{G'}$. The proof is provided with a running example showing the application of the rules. The grammar in Figure \ref{fig:cnf000} contains productions that are not in $\textit{CNF}$: $P1$ has more than $2$ hyperedges; $P2$ has a single non-terminal hyperedge; $P3$ is an empty production, but its \textit{lhs} is not $S$; $P4$ has 2 hyperedges one of which is terminal.

\begin{figure}[htpb]
\centering
\resizebox{10cm}{!}{%
\begin{tikzpicture}
 \node[text width = 1.0cm, align=center] at (-0.175, 0.0) {$S::=$} ;
 
 \begin{scope} [local bounding box = c1, shift = {(1.4, 0.0)}]
  \draw [ue] (-0.7, -0.3) -- (-0.7, -0.7) node[el, midway, left] {1} ;
  \draw [ue] (-0.4, 0.0) -- (0.0, 0.0) node[el, midway, above] {2} ;
  \draw [ue] (-0.7, 0.3) -- (-0.7, 0.7) node[el, midway, left] {3} ;
  \draw [ue] (0.0, -0.4) -- (0.0, 0.0) node[el, midway, right] {1} ;
  \draw [ue] (0.3, -0.7) -- (0.7, -0.7) node[el, midway, below] {2} ;
  \draw [ue] (-0.3, -0.7) -- (-0.7, -0.7) node[el, midway, below] {3} ;
  \draw [ue] (0.0, 0.4) -- (0.0, 0.0) node[el, midway, right] {1} ;
  \draw [ue] (-0.3, 0.7) -- (-0.7, 0.7) node[el, midway, above] {2} ;
  \draw [hl] (-1.0, -0.3) rectangle (-0.4, 0.3) node[midway] {$B$} ;
  \draw [hl] (-0.3, -1.0) rectangle (0.3, -0.4) node[midway] {$B$} ;
  \draw [hl] (-0.3, 0.4) rectangle (0.3, 1.0) node[midway] {$C$} ;
  \path (-0.7, 0.7) pic{ev} node[above left] {1} ;
  \path (-0.7, -0.7) pic{ev} node[below left] {2} ;
  \path (0.0, 0.0) pic{iv} ;
  \path (0.7, -0.7) pic{iv} ;
  \node [pn] at (0.0, -1.4) {$P1$} ;
 \end{scope}
 
 \node[text width = 1.0cm, align=center] at (3.325, 0.0) {$B::=$} ;
 
 \begin{scope} [local bounding box = c1, shift = {(4.9, 0.0)}]
  \draw [ue] (-0.7, 0.3) -- (-0.7, 0.7) node[el, midway, left] {1} ;
  \draw [ue] (-0.7, -0.3) -- (-0.7, -0.7) node[el, midway, left] {2} ;
  \draw [hl] (-1.0, -0.3) rectangle (-0.4, 0.3) node[midway] {$C$} ;
  \path (-0.7, 0.7) pic{ev} node[right] {1} ;
  \path (0.0, 0.0) pic{ev} node[above] {2} ;
  \path (-0.7, -0.7) pic{ev} node[right] {3} ;
  \node [pn] at (-0.35, -1.4) {$P2$} ;
 \end{scope}
 
 \draw[black, line width=0.5mm] (5.25, 1.05) -- (5.25, -1.75) ;
 
 \begin{scope} [local bounding box = c1, shift = {(5.6, 0.0)}]
  \path (0.0, 0.7) pic{ev} node[right] {1} ;
  \path (0.0, 0.0) pic{ev} node[right] {2} ;
  \path (0.0, -0.7) pic{ev} node[right] {3} ;
  \node [pn] at (0.0, -1.4) {$P3$} ;
 \end{scope}
 
 \node[text width = 1.0cm, align=center] at (6.825, 0.0) {$C::=$} ;
 
 \begin{scope} [local bounding box = c1, shift = {(7.7, 0.0)}]
  \draw [ue] (0.0, 0.3) -- (0.0, 0.7) node[el, midway, left] {1} ;
  \draw [ue] (0.0, -0.3) -- (0.0, -0.7) node[el, midway, left] {2} ;
  \draw [ue] (0.4, -0.7) -- (0.0, -0.7) node[el, midway, below] {1} ;
  \draw [ue] (0.7, -0.4) -- (0.7, 0.0) node[el, midway, left] {2} ;
  \draw [ht] (-0.3, -0.3) rectangle (0.3, 0.3) node[midway] {$a$} ;
  \draw [hl] (0.4, -1.0) rectangle (1.0, -0.4) node[midway] {$S$} ;
  \path (0.0, 0.7) pic{ev} node[right] {1} ;
  \path (0.7, 0.0) pic{ev} node[right] {2} ;
  \path (0.0, -0.7) pic{iv} ;
  \node [pn] at (0.5, -1.4) {$P4$} ;
 \end{scope}
 
 \draw[black, line width=0.5mm] (9.1, 1.05) -- (9.1, -1.75) ;
 
 \begin{scope} [local bounding box = c1, shift = {(9.8, 0.0)}]
  \draw [ue] (0.0, 0.3) -- (0.0, 0.7) node[el, midway, left] {1} ;
  \draw [ue] (0.0, -0.3) -- (0.0, -0.7) node[el, midway, left] {2} ;
  \draw [ht] (-0.3, -0.3) rectangle (0.3, 0.3) node[midway] {$c$} ;
  \path (0.0, 0.7) pic{ev} node[right] {1} ;
  \path (0.0, -0.7) pic{ev} node[right] {2} ;
  \node [pn] at (0.0, -1.4) {$P5$} ;
 \end{scope}
\end{tikzpicture}%
}
\caption{Starting grammar for the proof of \textit{CNF} equivalence.}
\label{fig:cnf000}
\end{figure}
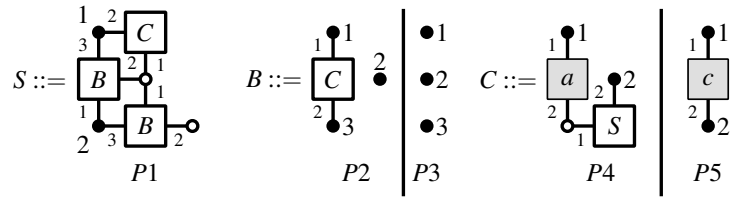

For a production $p = (A, R) \in P$, that is not already in \textit{CNF}, we consider the following set of rules, applied in this order, to obtain a corresponding equivalent set of productions $P'$ in \textit{CNF}:

\begin{enumerate}
\item[1.] If $p$ is the empty production, for each production $q = (B, X) \in P$ having $e \in E_{X}$ with $\textit{lab}(e) = A$ in its \textit{rhs}, for each production $q' = (A, Y) \in P$ having $A$ in its \textit{lhs} we apply the substitution $R' = X[e, Y]$ and add the productions $p = (B, R')$. We then remove the productions that are no longer needed. The proof of equivalence of the derivations $H \Rightarrow_{q} H' \Rightarrow_{q'} H''$ and $H \Rightarrow_{p'} H''$ is the following: if $e'$ with $\textit{lab}(e') = B$ is the hyperedge involved in the derivation $H \Rightarrow_{q} H'$ then $H'' = H[e'/X[e/Y]] = H[e'/R']$ since $R' = X[e, Y]$.

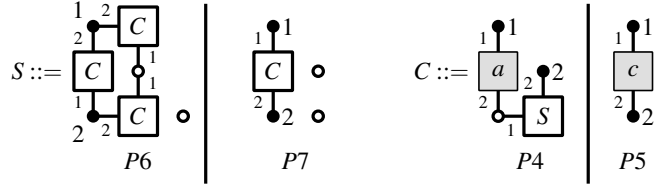
\begin{figure}[htpb]
\centering
\resizebox{9cm}{!}{%
\begin{tikzpicture}
 \node[text width = 1.0cm, align=center] at (-0.175, 0.0) {$S::=$} ;
 
 \begin{scope} [local bounding box = c1, shift = {(1.4, 0.0)}]
  \draw [ue] (-0.7, -0.3) -- (-0.7, -0.7) node[el, midway, left] {1} ;
  \draw [ue] (-0.7, 0.3) -- (-0.7, 0.7) node[el, midway, left] {2} ;
  \draw [ue] (0.0, -0.4) -- (0.0, 0.0) node[el, midway, right] {1} ;
  \draw [ue] (-0.3, -0.7) -- (-0.7, -0.7) node[el, midway, below] {2} ;
  \draw [ue] (0.0, 0.4) -- (0.0, 0.0) node[el, midway, right] {1} ;
  \draw [ue] (-0.3, 0.7) -- (-0.7, 0.7) node[el, midway, above] {2} ;
  \draw [hl] (-1.0, -0.3) rectangle (-0.4, 0.3) node[midway] {$C$} ;
  \draw [hl] (-0.3, -1.0) rectangle (0.3, -0.4) node[midway] {$C$} ;
  \draw [hl] (-0.3, 0.4) rectangle (0.3, 1.0) node[midway] {$C$} ;
  \path (-0.7, 0.7) pic{ev} node[above left] {1} ;
  \path (-0.7, -0.7) pic{ev} node[below left] {2} ;
  \path (0.0, 0.0) pic{iv} ;
  \path (0.7, -0.7) pic{iv} ;
  \node [pn] at (0.0, -1.4) {$P6$} ;
 \end{scope}
 
 \draw[black, line width=0.5mm] (2.45, 1.05) -- (2.45, -1.75) ;
 
 \begin{scope} [local bounding box = c1, shift = {(4.2, 0.0)}]
  \draw [ue] (-0.7, 0.3) -- (-0.7, 0.7) node[el, midway, left] {1} ;
  \draw [ue] (-0.7, -0.3) -- (-0.7, -0.7) node[el, midway, left] {2} ;
  \draw [hl] (-1.0, -0.3) rectangle (-0.4, 0.3) node[midway] {$C$} ;
  \path (-0.7, 0.7) pic{ev} node[right] {1} ;
  \path (-0.7, -0.7) pic{ev} node[right] {2} ;
  \path (0.0, 0.0) pic{iv} ;
  \path (0.0, -0.7) pic{iv} ;
  \node [pn] at (-0.35, -1.4) {$P7$} ;
 \end{scope}
 
 \node[text width = 1.0cm, align=center] at (6.125, 0.0) {$C::=$} ;
 
 \begin{scope} [local bounding box = c1, shift = {(7.0, 0.0)}]
  \draw [ue] (0.0, 0.3) -- (0.0, 0.7) node[el, midway, left] {1} ;
  \draw [ue] (0.0, -0.3) -- (0.0, -0.7) node[el, midway, left] {2} ;
  \draw [ue] (0.4, -0.7) -- (0.0, -0.7) node[el, midway, below] {1} ;
  \draw [ue] (0.7, -0.4) -- (0.7, 0.0) node[el, midway, left] {2} ;
  \draw [ht] (-0.3, -0.3) rectangle (0.3, 0.3) node[midway] {$a$} ;
  \draw [hl] (0.4, -1.0) rectangle (1.0, -0.4) node[midway] {$S$} ;
  \path (0.0, 0.7) pic{ev} node[right] {1} ;
  \path (0.7, 0.0) pic{ev} node[right] {2} ;
  \path (0.0, -0.7) pic{iv} ;
  \node [pn] at (0.5, -1.4) {$P4$} ;
 \end{scope}
 
 \draw[black, line width=0.5mm] (8.4, 1.05) -- (8.4, -1.75) ;
 
 \begin{scope} [local bounding box = c1, shift = {(9.1, 0.0)}]
  \draw [ue] (0.0, 0.3) -- (0.0, 0.7) node[el, midway, left] {1} ;
  \draw [ue] (0.0, -0.3) -- (0.0, -0.7) node[el, midway, left] {2} ;
  \draw [ht] (-0.3, -0.3) rectangle (0.3, 0.3) node[midway] {$c$} ;
  \path (0.0, 0.7) pic{ev} node[right] {1} ;
  \path (0.0, -0.7) pic{ev} node[right] {2} ;
  \node [pn] at (0.0, -1.4) {$P5$} ;
 \end{scope}
\end{tikzpicture}%
}
\caption{Removal of the empty production $P3$.}
\label{fig:cnf001}
\end{figure}

In order to remove the empty production $P3$ (Fig. \ref{fig:cnf001}) we apply the replacements of all the productions having $B$ as their $\textit{lhs}$ to all the productions having a hyperedge labelled as $B$ in their $\textit{rhs}$. We remove $P1$ and introduce the productions $P6$ and $P7$. We then remove $P2$ and $P3$ since they are no longer needed.

\item[2.] If $E_R = \{e'\}$ with $\textit{lab}(e') \in N$ for each production $q = (\textit{lab}(e'), X) \in P$ we add the production $p' = (\textit{lab}(e), R')$ with $R' = R[e'/X]$. If $E_{R'} = \{e''\}$ with $\textit{lab}(e'') \in N$ this step is iterated and terminates when $|E_{R'}| > 1$ or $E_{R'} = \{e_t\}$ with $\textit{lab}(e_t) \in \Sigma$ or $|E_{R'}| = 0$ and $|V_{R'}| > \textit{ext}_{R'}$. The proof of equivalence of the derivations $H \Rightarrow_{p} H' \Rightarrow_{q} H''$ and $H \Rightarrow_{p'} H''$ is the following: if $e'$ is the hyperedge involved in the derivation $H' \Rightarrow_{q} H''$ then $H'' = H'[e/R[e'/X]] = H[e/R']$ since $R' = R[e'/X]$.

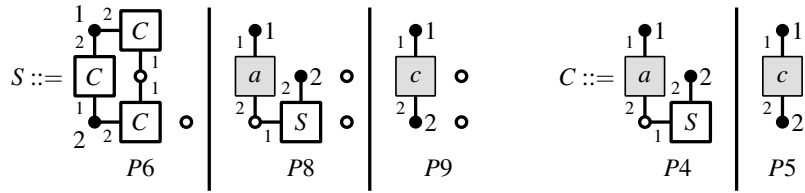
\begin{figure}[htpb]
\centering
\resizebox{11cm}{!}{%
\begin{tikzpicture}
 \node[text width = 1.0cm, align=center] at (-0.175, 0.0) {$S::=$} ;
 
 \begin{scope} [local bounding box = c1, shift = {(1.4, 0.0)}]
  \draw [ue] (-0.7, -0.3) -- (-0.7, -0.7) node[el, midway, left] {1} ;
  \draw [ue] (-0.7, 0.3) -- (-0.7, 0.7) node[el, midway, left] {2} ;
  \draw [ue] (0.0, -0.4) -- (0.0, 0.0) node[el, midway, right] {1} ;
  \draw [ue] (-0.3, -0.7) -- (-0.7, -0.7) node[el, midway, below] {2} ;
  \draw [ue] (0.0, 0.4) -- (0.0, 0.0) node[el, midway, right] {1} ;
  \draw [ue] (-0.3, 0.7) -- (-0.7, 0.7) node[el, midway, above] {2} ;
  \draw [hl] (-1.0, -0.3) rectangle (-0.4, 0.3) node[midway] {$C$} ;
  \draw [hl] (-0.3, -1.0) rectangle (0.3, -0.4) node[midway] {$C$} ;
  \draw [hl] (-0.3, 0.4) rectangle (0.3, 1.0) node[midway] {$C$} ;
  \path (-0.7, 0.7) pic{ev} node[above left] {1} ;
  \path (-0.7, -0.7) pic{ev} node[below left] {2} ;
  \path (0.0, 0.0) pic{iv} ;
  \path (0.7, -0.7) pic{iv} ;
  \node [pn] at (0.0, -1.4) {$P6$} ;
 \end{scope}
 
 \draw[black, line width=0.5mm] (2.45, 1.05) -- (2.45, -1.75) ;
 
 \begin{scope} [local bounding box = c1, shift = {(3.15, 0.0)}]
  \draw [ue] (0.0, 0.3) -- (0.0, 0.7) node[el, midway, left] {1} ;
  \draw [ue] (0.0, -0.3) -- (0.0, -0.7) node[el, midway, left] {2} ;
  \draw [ue] (0.4, -0.7) -- (0.0, -0.7) node[el, midway, below] {1} ;
  \draw [ue] (0.7, -0.4) -- (0.7, 0.0) node[el, midway, left] {2} ;
  \draw [ht] (-0.3, -0.3) rectangle (0.3, 0.3) node[midway] {$a$} ;
  \draw [hl] (0.4, -1.0) rectangle (1.0, -0.4) node[midway] {$S$} ;
  \path (0.0, 0.7) pic{ev} node[right] {1} ;
  \path (0.7, 0.0) pic{ev} node[right] {2} ;
  \path (0.0, -0.7) pic{iv} ;
  \path (1.4, 0.0) pic{iv} ;
  \path (1.4, -0.7) pic{iv} ;
  \node [pn] at (0.7, -1.4) {$P8$} ;
 \end{scope}
 
 \draw[black, line width=0.5mm] (4.9, 1.05) -- (4.9, -1.75) ;
 
 \begin{scope} [local bounding box = c1, shift = {(6.3, 0.0)}]
  \draw [ue] (-0.7, 0.3) -- (-0.7, 0.7) node[el, midway, left] {1} ;
  \draw [ue] (-0.7, -0.3) -- (-0.7, -0.7) node[el, midway, left] {2} ;
  \draw [ht] (-1.0, -0.3) rectangle (-0.4, 0.3) node[midway] {$c$} ;
  \path (-0.7, 0.7) pic{ev} node[right] {1} ;
  \path (-0.7, -0.7) pic{ev} node[right] {2} ;
  \path (0.0, 0.0) pic{iv} ;
  \path (0.0, -0.7) pic{iv} ;
  \node [pn] at (-0.35, -1.4) {$P9$} ;
 \end{scope}
 
 \node[text width = 1.0cm, align=center] at (8.225, 0.0) {$C::=$} ;
 
 \begin{scope} [local bounding box = c1, shift = {(9.1, 0.0)}]
  \draw [ue] (0.0, 0.3) -- (0.0, 0.7) node[el, midway, left] {1} ;
  \draw [ue] (0.0, -0.3) -- (0.0, -0.7) node[el, midway, left] {2} ;
  \draw [ue] (0.4, -0.7) -- (0.0, -0.7) node[el, midway, below] {1} ;
  \draw [ue] (0.7, -0.4) -- (0.7, 0.0) node[el, midway, left] {2} ;
  \draw [ht] (-0.3, -0.3) rectangle (0.3, 0.3) node[midway] {$a$} ;
  \draw [hl] (0.4, -1.0) rectangle (1.0, -0.4) node[midway] {$S$} ;
  \path (0.0, 0.7) pic{ev} node[right] {1} ;
  \path (0.7, 0.0) pic{ev} node[right] {2} ;
  \path (0.0, -0.7) pic{iv} ;
  \node [pn] at (0.5, -1.4) {$P4$} ;
 \end{scope}
 
 \draw[black, line width=0.5mm] (10.5, 1.05) -- (10.5, -1.75) ;
 
 \begin{scope} [local bounding box = c1, shift = {(11.2, 0.0)}]
  \draw [ue] (0.0, 0.3) -- (0.0, 0.7) node[el, midway, left] {1} ;
  \draw [ue] (0.0, -0.3) -- (0.0, -0.7) node[el, midway, left] {2} ;
  \draw [ht] (-0.3, -0.3) rectangle (0.3, 0.3) node[midway] {$c$} ;
  \path (0.0, 0.7) pic{ev} node[right] {1} ;
  \path (0.0, -0.7) pic{ev} node[right] {2} ;
  \node [pn] at (0.0, -1.4) {$P5$} ;
 \end{scope}
\end{tikzpicture}%
}
\caption{Removal of production $P7$}
\label{fig:cnf002}
\end{figure}

Since $P7$ has a single non-terminal hyperedge $C$ (Fig. \ref{fig:cnf002}), we apply a replacement for each production that has $C$ on its $\textit{lhs}$. In our case, using the replacements of $P4$ and $P5$, we obtain $P8$ and $P9$. The production $P7$ is removed from the grammar.

\item[3.] If $|E_R| = k > 2$ we consider the subgraph $X$ of $R$ composed by the subset $E_X \subset E_R$ of hyperedges $e_2, \ldots, e_k$ and their attachment nodes. We introduce a new label $T$ so that $N' = N \cup \{T\}$ and a new handle $T^{\bullet}$ of $e_T$ with $\textit{ext}(e_T) = \underset{2 \leq i \leq k}{\bigcup} \textit{att}(e_i)$ such that $\textit{type}(e_T) = \textit{type}(X)$. We then consider the hypergraph $R'$ composed by $R \backslash X$ and $T^{\bullet}$ where $V_{R'} = V_{R \backslash X} \cup V_{T^{\bullet}}$ and $E_{R'} = E_{R \backslash X} \cup E_{T^{\bullet}}$. Finally we add the productions $p' = (A, R'), p'' = (T, X)$ to $P'$. If $|E_X| > 2$ this step is iterated. The proof of equivalence of the derivations $H \Rightarrow_{p} H'$ and $H \Rightarrow^*_{P'} H'$ is the following: if $e_a$ is the handle of the \textit{lhs} of $p$ we consider the following equivalence of the replacements then $H' = H[e_a/R] = H[e_a/R'[e_T/X]]$ since $R = R'[e_T/X]$.

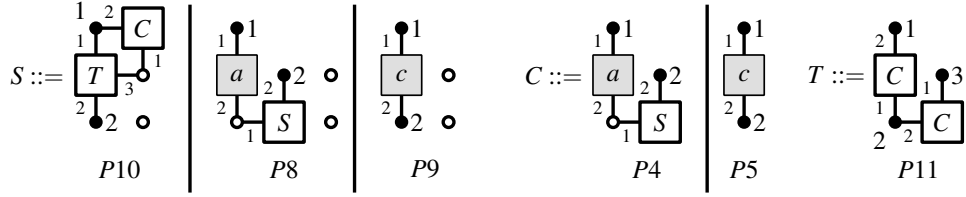
\begin{figure}[htpb]
\centering
\resizebox{13cm}{!}{%
\begin{tikzpicture}
 \node[text width = 1.0cm, align=center] at (-0.175, 0.0) {$S::=$} ;
 
 \begin{scope} [local bounding box = c1, shift = {(1.4, 0.0)}]
  \draw [ue] (-0.7, -0.3) -- (-0.7, -0.7) node[el, midway, left] {2} ;
  \draw [ue] (-0.4, 0.0) -- (0.0, 0.0) node[el, midway, below] {3} ;
  \draw [ue] (-0.7, 0.3) -- (-0.7, 0.7) node[el, midway, left] {1} ;
  \draw [ue] (0.0, 0.4) -- (0.0, 0.0) node[el, midway, right] {1} ;
  \draw [ue] (-0.3, 0.7) -- (-0.7, 0.7) node[el, midway, above] {2} ;
  \draw [hl] (-1.0, -0.3) rectangle (-0.4, 0.3) node[midway] {$T$} ;
  \draw [hl] (-0.3, 0.4) rectangle (0.3, 1.0) node[midway] {$C$} ;
  \path (-0.7, 0.7) pic{ev} node[above left] {1} ;
  \path (-0.7, -0.7) pic{ev} node[right] {2} ;
  \path (0.0, 0.0) pic{iv} ;
  \path (0.0, -0.7) pic{iv} ;
  \node [pn] at (-0.35, -1.4) {$P10$} ;
 \end{scope}
 
 \draw[black, line width=0.5mm] (2.1, 1.05) -- (2.1, -1.75) ;
 
 \begin{scope} [local bounding box = c1, shift = {(2.8, 0.0)}]
  \draw [ue] (0.0, 0.3) -- (0.0, 0.7) node[el, midway, left] {1} ;
  \draw [ue] (0.0, -0.3) -- (0.0, -0.7) node[el, midway, left] {2} ;
  \draw [ue] (0.4, -0.7) -- (0.0, -0.7) node[el, midway, below] {1} ;
  \draw [ue] (0.7, -0.4) -- (0.7, 0.0) node[el, midway, left] {2} ;
  \draw [ht] (-0.3, -0.3) rectangle (0.3, 0.3) node[midway] {$a$} ;
  \draw [hl] (0.4, -1.0) rectangle (1.0, -0.4) node[midway] {$S$} ;
  \path (0.0, 0.7) pic{ev} node[right] {1} ;
  \path (0.7, 0.0) pic{ev} node[right] {2} ;
  \path (0.0, -0.7) pic{iv} ;
  \path (1.4, 0.0) pic{iv} ;
  \path (1.4, -0.7) pic{iv} ;
  \node [pn] at (0.7, -1.4) {$P8$} ;
 \end{scope}
 
 \draw[black, line width=0.5mm] (4.55, 1.05) -- (4.55, -1.75) ;
 
 \begin{scope} [local bounding box = c1, shift = {(5.95, 0.0)}]
  \draw [ue] (-0.7, 0.3) -- (-0.7, 0.7) node[el, midway, left] {1} ;
  \draw [ue] (-0.7, -0.3) -- (-0.7, -0.7) node[el, midway, left] {2} ;
  \draw [ht] (-1.0, -0.3) rectangle (-0.4, 0.3) node[midway] {$c$} ;
  \path (-0.7, 0.7) pic{ev} node[right] {1} ;
  \path (-0.7, -0.7) pic{ev} node[right] {2} ;
  \path (0.0, 0.0) pic{iv} ;
  \path (0.0, -0.7) pic{iv} ;
  \node [pn] at (-0.35, -1.4) {$P9$} ;
 \end{scope}
 
 \node[text width = 1.0cm, align=center] at (7.525, 0.0) {$C::=$} ;
 
 \begin{scope} [local bounding box = c1, shift = {(8.4, 0.0)}]
  \draw [ue] (0.0, 0.3) -- (0.0, 0.7) node[el, midway, left] {1} ;
  \draw [ue] (0.0, -0.3) -- (0.0, -0.7) node[el, midway, left] {2} ;
  \draw [ue] (0.4, -0.7) -- (0.0, -0.7) node[el, midway, below] {1} ;
  \draw [ue] (0.7, -0.4) -- (0.7, 0.0) node[el, midway, left] {2} ;
  \draw [ht] (-0.3, -0.3) rectangle (0.3, 0.3) node[midway] {$a$} ;
  \draw [hl] (0.4, -1.0) rectangle (1.0, -0.4) node[midway] {$S$} ;
  \path (0.0, 0.7) pic{ev} node[right] {1} ;
  \path (0.7, 0.0) pic{ev} node[right] {2} ;
  \path (0.0, -0.7) pic{iv} ;
  \node [pn] at (0.5, -1.4) {$P4$} ;
 \end{scope}
 
 \draw[black, line width=0.5mm] (9.8, 1.05) -- (9.8, -1.75) ;
 
 \begin{scope} [local bounding box = c1, shift = {(10.35, 0.0)}]
  \draw [ue] (0.0, 0.3) -- (0.0, 0.7) node[el, midway, left] {1} ;
  \draw [ue] (0.0, -0.3) -- (0.0, -0.7) node[el, midway, left] {2} ;
  \draw [ht] (-0.3, -0.3) rectangle (0.3, 0.3) node[midway] {$c$} ;
  \path (0.0, 0.7) pic{ev} node[right] {1} ;
  \path (0.0, -0.7) pic{ev} node[right] {2} ;
  \node [pn] at (0.0, -1.4) {$P5$} ;
 \end{scope}
 
 \node[text width = 1.0cm, align=center] at (11.725, -0.0) {$T::=$} ;
 
 \begin{scope} [local bounding box = c1, shift = {(13.3, 0.0)}]
  \draw [ue] (-0.7, -0.3) -- (-0.7, -0.7) node[el, midway, left] {1} ;
  \draw [ue] (-0.7, 0.3) -- (-0.7, 0.7) node[el, midway, left] {2} ;
  \draw [ue] (0.0, -0.4) -- (0.0, 0.0) node[el, midway, left] {1} ;
  \draw [ue] (-0.3, -0.7) -- (-0.7, -0.7) node[el, midway, below] {2} ;
  \draw [hl] (-1.0, -0.3) rectangle (-0.4, 0.3) node[midway] {$C$} ;
  \draw [hl] (-0.3, -0.4) rectangle (0.3, -1.0) node[midway] {$C$} ;
  \path (-0.7, 0.7) pic{ev} node[right] {1} ;
  \path (-0.7, -0.7) pic{ev} node[below left] {2} ;
  \path (0.0, 0.0) pic{ev} node[right] {3} ;
  \node [pn] at (-0.35, -1.4) {$P11$} ;
 \end{scope}
\end{tikzpicture}%
}
\caption{Removal of production $P6$}
\label{fig:cnf003}
\end{figure}

Since production $P6$ has three non-terminal hyperedges (Fig. \ref{fig:cnf003}), we create a new label $T$, a new handle $T^{\bullet}$ and the production $P11$. Then we add the production $P10$ so that the replacement of the hyperedge labelled as $T$ by the \textit{rhs} of $P11$ results in the \textit{rhs} of $P6$. The production $P6$ is then removed from the grammar.

\item[4.] If $|E_R| > 1$ and exists $e' \in E_R$ such that $\textit{lab}(e') \in \Sigma$ a new label $T$ is introduced so that $N' = N \cup \{T\}$. We add 2 new productions $p' = (A, R')$ to $P'$ where $R' = R$ with $\textit{lab}(e') = T$ and $p'' = (T, e^{'\bullet})$. This step is repeated for each $e' \in E_R$ with $\textit{lab}(e') \in \Sigma$. Due to the confluence property \cite{courcelle-1987-aad} of $\textit{HRG}$s the order in which the terminal hyperedges are chosen is irrelevant. The proof of equivalence of the derivations $H \Rightarrow_{p} H''$ and $H \Rightarrow_{p'} H' \Rightarrow_{p''} H''$ is the following: if $e' \in E_R$ with $\textit{lab}(e') \in \Sigma$ is the hyperedge involved in the derivation $H \Rightarrow_{p} H''$ then $H' = H[e/R] = H[e/R'[e'/e^{'\bullet}]]$ since $R = R'[e'/e^{'\bullet}]$.

\begin{figure}[htpb]
\centering
\resizebox{14cm}{!}{%
\begin{tikzpicture}
 \node[text width = 1.0cm, align=center] at (-0.175, 0.0) {$S::=$} ;
 
 \begin{scope} [local bounding box = c1, shift = {(1.4, 0.0)}]
  \draw [ue] (-0.7, -0.3) -- (-0.7, -0.7) node[el, midway, left] {2} ;
  \draw [ue] (-0.4, 0.0) -- (0.0, 0.0) node[el, midway, below] {3} ;
  \draw [ue] (-0.7, 0.3) -- (-0.7, 0.7) node[el, midway, left] {1} ;
  \draw [ue] (0.0, 0.4) -- (0.0, 0.0) node[el, midway, right] {1} ;
  \draw [ue] (-0.3, 0.7) -- (-0.7, 0.7) node[el, midway, above] {2} ;
  \draw [hl] (-1.0, -0.3) rectangle (-0.4, 0.3) node[midway] {$T$} ;
  \draw [hl] (-0.3, 0.4) rectangle (0.3, 1.0) node[midway] {$C$} ;
  \path (-0.7, 0.7) pic{ev} node[above left] {1} ;
  \path (-0.7, -0.7) pic{ev} node[right] {2} ;
  \path (0.0, 0.0) pic{iv} ;
  \path (0.0, -0.7) pic{iv} ;
  \node [pn] at (-0.35, -1.4) {$P10$} ;
 \end{scope}
 
 \draw[black, line width=0.5mm] (2.1, 1.05) -- (2.1, -1.75) ;
 
 \begin{scope} [local bounding box = c1, shift = {(2.8, 0.0)}]
  \draw [ue] (0.0, 0.3) -- (0.0, 0.7) node[el, midway, left] {1} ;
  \draw [ue] (0.0, -0.3) -- (0.0, -0.7) node[el, midway, left] {2} ;
  \draw [ue] (0.4, -0.7) -- (0.0, -0.7) node[el, midway, below] {1} ;
  \draw [ue] (0.7, -0.4) -- (0.7, 0.0) node[el, midway, left] {2} ;
  \draw [hl] (-0.3, -0.3) rectangle (0.3, 0.3) node[midway] {$A$} ;
  \draw [hl] (0.4, -1.0) rectangle (1.0, -0.4) node[midway] {$S$} ;
  \path (0.0, 0.7) pic{ev} node[right] {1} ;
  \path (0.7, 0.0) pic{ev} node[right] {2} ;
  \path (0.0, -0.7) pic{iv} ;
  \path (1.4, 0.0) pic{iv} ;
  \path (1.4, -0.7) pic{iv} ;
  \node [pn] at (0.7, -1.4) {$P8$} ;
 \end{scope}
 
 \draw[black, line width=0.5mm] (4.55, 1.05) -- (4.55, -1.75) ;
 
 \begin{scope} [local bounding box = c1, shift = {(5.95, 0.0)}]
  \draw [ue] (-0.7, 0.3) -- (-0.7, 0.7) node[el, midway, left] {1} ;
  \draw [ue] (-0.7, -0.3) -- (-0.7, -0.7) node[el, midway, left] {2} ;
  \draw [ht] (-1.0, -0.3) rectangle (-0.4, 0.3) node[midway] {$c$} ;
  \path (-0.7, 0.7) pic{ev} node[right] {1} ;
  \path (-0.7, -0.7) pic{ev} node[right] {2} ;
  \path (0.0, 0.0) pic{iv} ;
  \path (0.0, -0.7) pic{iv} ;
  \node [pn] at (-0.35, -1.4) {$P9$} ;
 \end{scope}
 
 \node[text width = 1.0cm, align=center] at (7.525, 0.0) {$C::=$} ;
 
 \begin{scope} [local bounding box = c1, shift = {(8.4, 0.0)}]
  \draw [ue] (0.0, 0.3) -- (0.0, 0.7) node[el, midway, left] {1} ;
  \draw [ue] (0.0, -0.3) -- (0.0, -0.7) node[el, midway, left] {2} ;
  \draw [ue] (0.4, -0.7) -- (0.0, -0.7) node[el, midway, below] {1} ;
  \draw [ue] (0.7, -0.4) -- (0.7, 0.0) node[el, midway, left] {2} ;
  \draw [hl] (-0.3, -0.3) rectangle (0.3, 0.3) node[midway] {$A$} ;
  \draw [hl] (0.4, -1.0) rectangle (1.0, -0.4) node[midway] {$S$} ;
  \path (0.0, 0.7) pic{ev} node[right] {1} ;
  \path (0.7, 0.0) pic{ev} node[right] {2} ;
  \path (0.0, -0.7) pic{iv} ;
  \node [pn] at (0.5, -1.4) {$P4$} ;
 \end{scope}
 
 \draw[black, line width=0.5mm] (9.8, 1.05) -- (9.8, -1.75) ;
 
 \begin{scope} [local bounding box = c1, shift = {(10.35, 0.0)}]
  \draw [ue] (0.0, 0.3) -- (0.0, 0.7) node[el, midway, left] {1} ;
  \draw [ue] (0.0, -0.3) -- (0.0, -0.7) node[el, midway, left] {2} ;
  \draw [ht] (-0.3, -0.3) rectangle (0.3, 0.3) node[midway] {$c$} ;
  \path (0.0, 0.7) pic{ev} node[right] {1} ;
  \path (0.0, -0.7) pic{ev} node[right] {2} ;
  \node [pn] at (0.0, -1.4) {$P5$} ;
 \end{scope}
 
 \node[text width = 1.0cm, align=center] at (11.725, -0.0) {$T::=$} ;
 
 \begin{scope} [local bounding box = c1, shift = {(13.3, 0.0)}]
  \draw [ue] (-0.7, -0.3) -- (-0.7, -0.7) node[el, midway, left] {1} ;
  \draw [ue] (-0.7, 0.3) -- (-0.7, 0.7) node[el, midway, left] {2} ;
  \draw [ue] (0.0, -0.4) -- (0.0, 0.0) node[el, midway, left] {1} ;
  \draw [ue] (-0.3, -0.7) -- (-0.7, -0.7) node[el, midway, below] {2} ;
  \draw [hl] (-1.0, -0.3) rectangle (-0.4, 0.3) node[midway] {$C$} ;
  \draw [hl] (-0.3, -0.4) rectangle (0.3, -1.0) node[midway] {$C$} ;
  \path (-0.7, 0.7) pic{ev} node[right] {1} ;
  \path (-0.7, -0.7) pic{ev} node[below left] {2} ;
  \path (0.0, 0.0) pic{ev} node[right] {3} ;
  \node [pn] at (-0.35, -1.4) {$P11$} ;
 \end{scope}
 
 \node[text width = 1.0cm, align=center] at (14.525, 0.0) {$A::=$} ;
 
 \begin{scope} [local bounding box = c1, shift = {(15.4, 0.0)}]
  \draw [ue] (0.0, 0.3) -- (0.0, 0.7) node[el, midway, left] {1} ;
  \draw [ue] (0.0, -0.3) -- (0.0, -0.7) node[el, midway, left] {2} ;
  \draw [ht] (-0.3, -0.3) rectangle (0.3, 0.3) node[midway] {$a$} ;
  \path (0.0, 0.7) pic{ev} node[right] {1} ;
  \path (0.0, -0.7) pic{ev} node[right] {2} ;
  \node [pn] at (0.0, -1.4) {$P14$} ;
 \end{scope}
\end{tikzpicture}%
}
\caption{Removal of productions $P8$ and $P4$}
\label{fig:cnf004}
\end{figure}
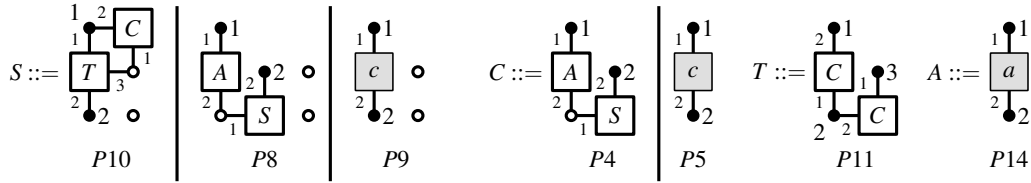

Both \textit{rhs} of productions $P4$ and $P8$ are composed by a terminal and a non-terminal hyperedge. We introduce a new label $A$ and a its handle $A^{\bullet}$ along with the production $P14$ (Figure \ref{fig:cnf004}). We then add the productions $P12$ and $P13$ resulting from the substitution of the terminal hyperedges labelled with $a$ by the non-terminal hyperedges labelled with $A$. Productions $P4$ and $P8$ are then removed from the grammar.

\end{enumerate}

\end{proof}

From this point on, if not explicitly specified, we always refer to an \textit{HRG} as an \textit{HRG} in \textit{CNF}. We stress that the input of the method must be already provided in this form, that is, the time required for the transformation is not taken into account during the evaluation of the time complexity.

In order to complete the adaptation of the grammar we propose a more suitable short-hand representation of the productions that only extracts the necessary information, so, for each $p = (A,R) \in P$ and $i \in \mathbb{N}_0$ we use the following notations:

\begin{itemize}
\setlength\itemsep{-0.2em}
\item{} $A \overset{p}{\longrightarrow} BC, i$ for a \textit{non-terminal} production where $B, C \in N$ are the labels of the marked hyperedges $e_\alpha, e_\beta \in R$ with $\textit{mark}(e_\alpha) = \alpha$, $\textit{mark}(e_\beta) = \beta$ and $i = |V_R \backslash \textit{ext}_R|$.

\item{} $A \overset{p}{\longrightarrow} a, i$ for a \textit{terminal} production where $a \in \Sigma$ is the label of the marked hyperedge $e_{\alpha} \in R$ and $i = |V_R\backslash \textit{ext}_R|$.
\item{} $A \overset{p}{\longrightarrow} \lambda, i$ for a \textit{terminal} production where $E_R = \emptyset$ and $i = |V_R\backslash \textit{ext}_R|$.
\end{itemize}

Matrix $M_2$ (Tab. \ref{tab:PreProcessingTables}) shows the short-hand representation of the productions of the grammar in Figure \ref{fig:cnftggrammar}. Considering a second input $n$ as the size of the hypergraph to be generated, we are ready to describe a pair of algorithms $(\textbf{Pre},\textbf{Gen})$ for the random sampling of a hypergraph $H$ from a grammar $G$. Such a hypergraph is sampled in $L^A_n(G)$, where $A \in N$ is the non-terminal we begin the sampling from. If $A = S$ and $G$ is $n$-unambiguous, $H$ is sampled uniformly at random among all the hypergraphs in $L_n(G)$.

\subsection{Pre-processing phase}
\label{subsec:PreprocessingPhase}
The Pre-processing phase is used to construct a pair of matrices $M_1, M_2$ needed in the generation phase. Let $G = (N,\Sigma,P,S,(mark_{p})_{p \in P})$ be an $\textit{HRG}$, let $n \in \mathbb{N}$ be the size of the hypergraph $H \in L_n(G)$ we would like to generate, then the algorithm $\textbf{Pre}$ (Alg. \ref{alg:PreprocessingPhase}) produces the structures required for the generation.

\begin{table}[htpb]
\caption{Matrices $M_1$ and $M_2$ resulting from $\textbf{Pre}(G',12)$}
\label{tab:PreProcessingTables}
\begin{center}
\setlength\tabcolsep{3.5pt}
\begin{tabular}{cc}
$M_1$ & $M_2$\\\footnotesize
\begin{tabular}[t]{ l|r r r r r r r r r r r r }
 N & 1 & 2 & 3 & 4 & 5 & 6 & 7 & 8 & 9 & 10 & 11 & 12 \\
 \hline
 $A$ & 1 & 0 & 2 & 0 & 14 & 0 &  92 & 0 & 616 & 0 & 3920 & 0 \\
 $B$ & 2 & 0 & 8 & 0 & 32 & 0 & 128 & 0 & 256 & 0 &  512 & 0 \\
 $C$ & 0 & 0 & 2 & 0 & 32 & 0 &  76 & 0 & 488 & 0 & 2928 & 0 \\
 $D$ & 2 & 0 & 0 & 0 &  0 & 0 &   0 & 0 &   0 & 0 &    0 & 0 \\
 \hline
\end{tabular} &\footnotesize
\begin{tabular}[t]{ l|r r r r r r r r r r r r r }
 P \ \ \ \ \ \ \ \ \ \ \ \ \ \ \ \ \ & 1 & 2 & 3 & 4 & 5 & 6 & 7 & 8 & 9 & 10 & 11 & 12 \\
 \hline
 $A \overset{P1}{\longrightarrow} CA,1$ & 0 & 0 & 0 & 0 &  2 & 0 & 16 & 0 & 128 & 0 &  992 & 0 \\
 $A \overset{P2}{\longrightarrow} BA,1$ & 0 & 0 & 2 & 0 & 12 & 0 & 76 & 0 & 488 & 0 & 2928 & 0 \\
 $B \overset{P4}{\longrightarrow} DB,1$ & 0 & 0 & 4 & 0 & 16 & 0 & 64 & 0 & 128 & 0 &  256 & 0 \\
 $B \overset{P5}{\longrightarrow} DB,1$ & 0 & 0 & 4 & 0 & 16 & 0 & 64 & 0 & 128 & 0 &  256 & 0 \\
 $C \overset{P8}{\longrightarrow} BA,1$ & 0 & 0 & 2 & 0 & 12 & 0 & 76 & 0 & 488 & 0 & 2928 & 0 \\
 $A \overset{P3}{\longrightarrow} 1,0$  & 1 & 0 & 0 & 0 & 0  & 0 &  0 & 0 &   0 & 0 &    0 & 0 \\
 $B \overset{P6}{\longrightarrow} +,0$  & 1 & 0 & 0 & 0 & 0  & 0 &  0 & 0 &   0 & 0 &    0 & 0 \\
 $B \overset{P7}{\longrightarrow} *,0$  & 1 & 0 & 0 & 0 & 0  & 0 &  0 & 0 &   0 & 0 &    0 & 0 \\
 $D \overset{P9}{\longrightarrow} +,0$  & 1 & 0 & 0 & 0 & 0  & 0 &  0 & 0 &   0 & 0 &    0 & 0 \\
 $D \overset{P10}{\longrightarrow} *,0$ & 1 & 0 & 0 & 0 & 0  & 0 &  0 & 0 &   0 & 0 &    0 & 0 \\
 \hline
\end{tabular}
\end{tabular}
\end{center}
\end{table}

We begin initializing the entries of two matrices $M_1 = (N \times \mathbb{N})$ and $M_2 = (P \times \mathbb{N})$ to $0$. Each entry $(A, \ell)$ of $M_1$, also denoted as $A[\ell]$, represents the number of derivations yielding a hypergraph of size $\ell + \textit{type}(A)$, from a non-terminal $A \in N$. Each entry $(p, \ell)$ of $M_2$, also denoted as $p[\ell]$ represents the number of derivations yielding a hypergraph of size $\ell + |\textit{ext}_R|$, from a production $p \in P$. According to the type of production they are also denoted as $A \overset{p}{\longrightarrow} \lambda,i[\ell]$ or $A \overset{p}{\longrightarrow} a,i[\ell]$ for terminal productions and $A \overset{p}{\longrightarrow} BC,i[\ell]$ for a non-terminal production.
Considering each terminal production $p \in P_T$, either yielding a single terminal hyperedge $A \overset{p}{\longrightarrow} a,i$ or at least a single isolated node $A \overset{p}{\longrightarrow} \lambda,i$, the corresponding $M_2$ entry $p[i + 1]$ in the former case, or $p[i]$ in the latter, is set to $1$.
Then, for each $\ell \in \mathbb{N}$ in $1 \leq \ell \leq n$, for each non-terminal $A \in N$, $A[\ell] = \sum_{p \in P^A} p[\ell]$ and for each production $p \in P_N$, $p[\ell] = \sum_{0<k<\ell}B, [k] \cdot C[\ell - k]$.

The matrices can be used to generate hypergraphs in $L^A(G)$ of size $\ell + \textit{type}(A)$, with $1 \leq \ell \leq n$ from any non-terminal $A \in N$. If the non-terminal $A$ is chosen before the pre-processing phase we can reduce the size of the tables to $n - \textit{type}(A)$.
Table \ref{tab:PreProcessingTables} shows the result of running the algorithm $\textbf{Pre}$ using the grammar $G'$ in Figure \ref{fig:cnftggrammar} and a size of $12$ as input.

\subsection{Generation phase}
\label{subsec:GenerationPhase}
In the generation phase a non-terminal $\bar{A} \in N$ is chosen and a size-$\bar{n}$-hypergraph $H$, with $1 \leq \bar{n} \leq n + \textit{type}(A)$, is generated using the data collected in the matrices $M_1$, $M_2$ and a pseudo-random number generator $\textit{RNG}$. The algorithm $\textbf{Gen}$ (Alg. \ref{alg:GenerationPhase}) describes this process.

On input $\textbf{Gen}(G,\langle M_1, M_2 \rangle, \bar{A}, \bar{n} - \textit{type}(A))$, if $\bar{A}[\bar{n} - \textit{type}(A)] = 0$ the generating algorithm fails, otherwise, having $\bar{A}^{\bullet}$ as a basis, the algorithm recursively calls the function $\textbf{derH}$ proceeding through the following steps:

\begin{enumerate}
\setlength\itemsep{-0.2em}
\item The $\textit{RNG}$ is used to choose a production $p \in P^A$ with probability $p[\ell]/A[\ell]$.

\item If $p \in P^A_{\Sigma}$, the replacement of $e$, the $\textit{handle}$ of $A$, with the hypergraph $R$ in $\textit{rhs}(p)$ is returned.

\item If $p \in P^A_T$ the $\textit{RNG}$ is used again to choose a ``split" $0 < k < \ell'$ with $\ell' = \ell - i$ and probability $B[k] \cdot C[\ell' - k] / A \overset{p}{\longrightarrow} BC,i[\ell]$. The hypergraph $\textit{rhs}(p)[e_{\alpha} / \textbf{derH}(B, k),e_{\beta} / \textbf{derH}(C,\ell' - k)]$ produced by the replacement of $e_{\alpha}$ with the result on the recursive function on input $\textbf{derH}(B, k)$ and the replacement of $e_{\beta}$ with the result of the recursive function on input $\textbf{derH}(C,\ell' - k)$ is computed. Then, the replacement of the hyperedge $e$, the $\textit{handle}$ of $A$, with the aforementioned hypergraph is returned. We use the notation $B_k C_{\ell'-k}$ to indicate such a split.
\end{enumerate}

The derivation $d = A^{\bullet} \Rightarrow^*_P H$ in Figure \ref{fig:termsgraphderivation} corresponds to the sequence of replacements computed by the recursive function $\textbf{derH}$ to generate the size-$12$-hypergraph $H$ in Figure \ref{fig:hypergraph}, using non-terminal $A$ as input. For each step we show the probability of the production $p$ to be chosen and the choice of the split and its probability if $p \in P^N$. Since $G$ is non-ambiguous, the first step shows that $|L_{12}(G)| = 3920$, that is, there are $3920$ unique size-$12$-hypergraphs to choose from, each having a different ordered derivation tree. Figure \ref{fig:tree} shows the tree $t$ for which $\textit{yield}(t) = H$, so that $\textit{trav}(t)$, or equivalently $\textit{lmd}(H)$, corresponds to the unique sequence of productions applied by the generation algorithm to produce $H$. In the figure are also indicated the starting symbol $A$ and the replaced hyperedges $e_{\alpha}$ and $e_{\beta}$, respectively on the edges connecting the left and right child of each node. The proof of termination of the Generation algorithm is based on the assumption that the input grammar is non-contracting:

\begin{algorithm}[htpb]\label{alg:PreprocessingPhase}
\begin{multicols}{2}
\setstretch{0.65}
\caption{$\textbf{Pre}$ - Pre-processing phase}
\SetAlgoLined
 \textbf{Input:} $(G, n)$, where $G = (N,\Sigma,P,S,(mark_{p})_{p \in P})$ and $n \in \mathbb{N}$, $n \geq 1$\\
 \textbf{Output:} $\langle M_1, M_2 \rangle$\\
 \hrulefill\\
 \For{$1 \leq \ell \leq n$}{
  \ForEach{$A \in N$}{
   $A[\ell] := 0$\;
  }
  \ForEach{$p \in P$}{
   $p[\ell] := 0$\;
  }
 }
 \ForEach{$A \overset{p}{\longrightarrow} a,i \in P_{\Sigma}$}{
  $A \overset{p}{\longrightarrow} a,i[i + 1] := 1$\;
 }
 \ForEach{$A \overset{p}{\longrightarrow} \lambda,i \in P_{\Sigma}$}{
  $A \overset{p}{\longrightarrow} \lambda,i[i] := 1$\;
 }
 \For{$1 \leq \ell \leq n$}{
  \ForEach{$A \in N$}{
   \ForEach{$p \in P^A$}{
    $A[\ell] := A[\ell] + p[\ell]$\;
   }
  }
  \ForEach{$A \overset{p}{\longrightarrow} BC,i \in P_N$}{
   \For{$1 \leq k < \ell$}{
    $A \overset{p}{\longrightarrow} BC,i[\ell + i] := A \overset{p}{\longrightarrow} BC,i[\ell + i] + B[k] \cdot C[\ell-k]$\;
   }
  }
 }
\end{multicols}
\end{algorithm}

\begin{algorithm}[htpb]\label{alg:GenerationPhase}
\setstretch{0.65}
\caption{$\textbf{Gen}$ - Generation phase}
\SetAlgoLined
 \textbf{Input:} $(G, \langle M_1,M_2 \rangle, \bar{A}, \bar{n})$, where $G = (N,\Sigma,P,S,(mark_{p})_{p \in P})$, $\langle M_1,M_2 \rangle := \textbf{Pre}(G, n)$, $\bar{A} \in N$ and $\bar{n} \in \mathbb{N}$, $1 \leq \bar{n} \leq n + \textit{type}(\bar{A})$\\
 \textbf{Output:} $H \in L^{\bar{A}}_{\bar{n}}(G)$\\
 \hrulefill\\
 $\ell = \bar{n} - \textit{type}(\bar{A})$\\
 \If{$\bar{A}[\ell] = 0$}{
  \Return $\bot$\;
 }
 Recursively generate $H$ using $(\bar{A}, \ell)$ as first input as follows:\\
 \SetKwFunction{FMain}{}
 \SetKwProg{Fn}{function derH}{:}{}
 \Fn{\FMain{$A$, $\ell$}}{
  $p \longleftarrow \textit{RNG}$ with $p \in P^A$ and probability $p[\ell] / A[\ell]$\;
  \eIf{$p \in P_T$}{
   \Return $A^{\bullet}[e / R]$\;  
  }{
   $\ell' = \ell - i$\;
   $k \longleftarrow \textit{RNG}$ with $0 < k < \ell'$ and probability $B[k] \cdot C[{\ell' - k}] / (A \overset{p}{\longrightarrow} BC,i)[{\ell}]$\;
   \Return $A^{\bullet}[e / R[e_\alpha / \textbf{derH}(B, k),e_\beta / \textbf{derH}(C, \ell' - k)]]$\;
  }
 }
 \textbf{end function}
\end{algorithm}

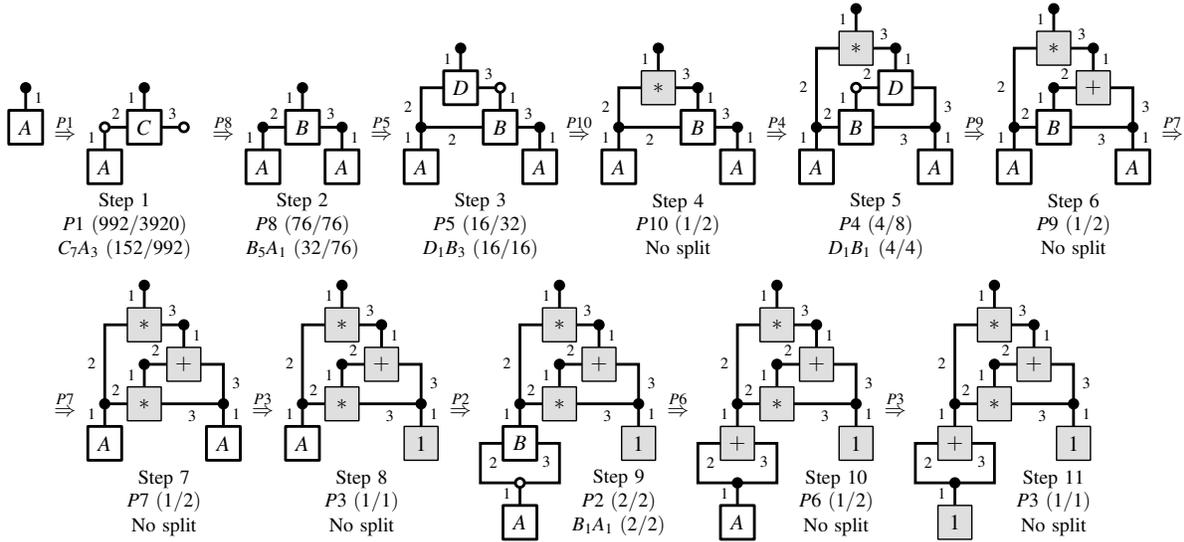
\begin{figure}[htpb]
\centering
\resizebox{\textwidth}{!}{%
\begin{tikzpicture}
 \begin{scope} [local bounding box = c1, shift = {(0.0, 0.0)}]
  \draw [ue] (0.0, 0.3) -- (0.0, 0.7) node[el, midway, right] {1} ;
  \draw [hl] (-0.3, 0.3) rectangle (0.3, -0.3) node[midway] {$A$} ;
  \path (0.0, 0.7) pic{ev} ;
 \end{scope}
 
 \node[text width = 1.0cm, align=center] at (0.7, 0.0) {$\overset{P1}{\Rightarrow}$} ;
 
 \begin{scope} [local bounding box = c1, shift = {(2.1, 0.0)}]
  \draw [ue] (0.0, 0.3) -- (0.0, 0.7) node[el, midway, left] {1} ;
  \draw [ue] (-0.3, 0.0) -- (-0.7, 0.0) node[el, midway, above] {2} ;
  \draw [ue] (0.3, 0.0) -- (0.7, 0.0) node[el, midway, above] {3} ;
  \draw [ue] (-0.7, -0.4) -- (-0.7, 0.0) node[el, midway, left] {1} ;
  \draw [hl] (-0.3, 0.3) rectangle (0.3, -0.3) node[midway] {$C$} ;
  \draw [hl] (-1.0, -0.4) rectangle (-0.4, -1.0) node[midway] {$A$} ;
  \path (0.0, 0.7) pic{ev} ;
  \path (-0.7, 0.0) pic{iv} ;
  \path (0.7, 0.0) pic{iv} ;
  \node[text width = 4.0cm, font = \small, align=center] at (-0.35, -1.75) {Step 1\\$P1$ $(992/3920)$\\$C_7 A_3$ $(152/992)$} ;
 \end{scope}
 
 \node[text width = 1.0cm, align=center] at (3.5, 0.0) {$\overset{P8}{\Rightarrow}$} ;
 
 \begin{scope} [local bounding box = c1, shift = {(4.9, 0.0)}]
  \draw [ue] (-0.7, -0.4) -- (-0.7, 0.0) node[el, midway, left] {1} ;
  \draw [ue] (0.0, 0.3) -- (0.0, 0.7) node[el, midway, left] {1} ;
  \draw [ue] (-0.3, 0.0) -- (-0.7, 0.0) node[el, midway, above] {2} ;
  \draw [ue] (0.3, 0.0) -- (0.7, 0.0) node[el, midway, above] {3} ;
  \draw [ue] (0.7, -0.4) -- (0.7, 0.0) node[el, midway, right] {1} ;
  \draw [hl] (-1.0, -0.4) rectangle (-0.4, -1.0) node[midway] {$A$} ;
  \draw [hl] (-0.3, 0.3) rectangle (0.3, -0.3) node[midway] {$B$} ;
  \draw [hl] (0.4, -0.4) rectangle (1.0, -1.0) node[midway] {$A$} ;
  \path (0.0, 0.7) pic{ev} ;
  \path (-0.7, 0.0) pic{ev} ;
  \path (0.7, 0.0) pic{ev} ;
  \node[text width = 4.0cm, font = \small, align=center] at (0.0, -1.75) {Step 2\\$P8$ $(76/76)$\\$B_5 A_1$ $(32/76)$} ;
 \end{scope}
 
 \node[text width = 1.0cm, align=center] at (6.3, 0.0) {$\overset{P5}{\Rightarrow}$} ;
 
 \begin{scope} [local bounding box = c1, shift = {(7.7, 0.0)}]
  \draw [ue] (-0.7, -0.4) -- (-0.7, 0.0) node[el, midway, left] {1} ;
  \draw [ue] (0.0, 1.0) -- (0.0, 1.4) node[el, midway, left] {1} ;
  \draw [ue] (-0.3, 0.7) -- (-0.7, 0.7) -- (-0.7, 0.0) node[el, midway, left] {2} ;
  \draw [ue] (0.3, 0.7) -- (0.7, 0.7) node[el, midway, above] {3} ;
  \draw [ue] (0.7, 0.3) -- (0.7, 0.7) node[el, midway, right] {1} ;
  \draw [ue] (0.4, 0.0) -- (-0.7, 0.0) node[el, midway, below] {2} ;
  \draw [ue] (1.0, 0.0) -- (1.4, 0.0) node[el, midway, above] {3} ;
  \draw [ue] (1.4, -0.4) -- (1.4, 0.0) node[el, midway, right] {1} ;
  \draw [hl] (-1.0, -0.4) rectangle (-0.4, -1.0) node[midway] {$A$} ;
  \draw [hl] (0.4, 0.3) rectangle (1.0, -0.3) node[midway] {$B$} ;
  \draw [hl] (1.1, -0.4) rectangle (1.7, -1.0) node[midway] {$A$} ;
  \draw [hl] (-0.3, 1.0) rectangle (0.3, 0.4) node[midway] {$D$} ;
  \path (0.0, 1.4) pic{ev} ;
  \path (-0.7, 0.0) pic{ev} ;
  \path (1.4, 0.0) pic{ev} ;
  \path (0.7, 0.7) pic{iv} ;
  \node[text width = 4.0cm, font = \small, align=center] at (0.35, -1.75) {Step 3\\$P5$ $(16/32)$\\$D_1 B_3$ $(16/16)$} ;
 \end{scope}
 
 \node[text width = 1.0cm, align=center] at (9.8, 0.0) {$\overset{P10}{\Rightarrow}$} ;
 
 \begin{scope} [local bounding box = c1, shift = {(11.2, 0.0)}]
  \draw [ue] (-0.7, -0.4) -- (-0.7, 0.0) node[el, midway, left] {1} ;
  \draw [ue] (0.0, 1.0) -- (0.0, 1.4) node[el, midway, left] {1} ;
  \draw [ue] (-0.3, 0.7) -- (-0.7, 0.7) -- (-0.7, 0.0) node[el, midway, left] {2} ;
  \draw [ue] (0.3, 0.7) -- (0.7, 0.7) node[el, midway, above] {3} ;
  \draw [ue] (0.7, 0.3) -- (0.7, 0.7) node[el, midway, right] {1} ;
  \draw [ue] (0.4, 0.0) -- (-0.7, 0.0) node[el, midway, below] {2} ;
  \draw [ue] (1.0, 0.0) -- (1.4, 0.0) node[el, midway, above] {3} ;
  \draw [ue] (1.4, -0.4) -- (1.4, 0.0) node[el, midway, right] {1} ;
  \draw [hl] (-1.0, -0.4) rectangle (-0.4, -1.0) node[midway] {$A$} ;
  \draw [hl] (0.4, 0.3) rectangle (1.0, -0.3) node[midway] {$B$} ;
  \draw [hl] (1.1, -0.4) rectangle (1.7, -1.0) node[midway] {$A$} ;
  \draw [ht] (-0.3, 1.0) rectangle (0.3, 0.4) node[midway] {$*$} ;
  \path (0.0, 1.4) pic{ev} ;
  \path (-0.7, 0.0) pic{ev} ;
  \path (1.4, 0.0) pic{ev} ;
  \path (0.7, 0.7) pic{ev} ;
  \node[text width = 4.0cm, font = \small, align=center] at (0.35, -1.75) {Step 4\\$P10$ $(1/2)$\\No split} ;
 \end{scope}
 
 \node[text width = 1.0cm, align=center] at (13.3, 0.0) {$\overset{P4}{\Rightarrow}$} ;
 
 \begin{scope} [local bounding box = c1, shift = {(14.7, 0.0)}]
  \draw [ue] (-0.7, -0.4) -- (-0.7, 0.0) node[el, midway, left] {1} ;
  \draw [ue] (0.0, 1.7) -- (0.0, 2.1) node[el, midway, left] {1} ;
  \draw [ue] (-0.3, 1.4) -- (-0.7, 1.4) -- (-0.7, 0.0) node[el, midway, left] {2} ;
  \draw [ue] (0.3, 1.4) -- (0.7, 1.4) node[el, midway, above] {3} ;
  \draw [ue] (0.0, 0.3) -- (0.0, 0.7) node[el, midway, left] {1} ;
  \draw [ue] (-0.3, 0.0) -- (-0.7, 0.0) node[el, midway, above] {2} ;
  \draw [ue] (0.3, 0.0) -- (1.4, 0.0) node[el, midway, below] {3} ;
  \draw [ue] (0.7, 1.0) -- (0.7, 1.4) node[el, midway, right] {1} ;
  \draw [ue] (0.4, 0.7) -- (0.0, 0.7) node[el, midway, above] {2} ;
  \draw [ue] (1.0, 0.7) -- (1.4, 0.7) -- (1.4, 0.0) node[el, midway, right] {3} ;
  \draw [ue] (1.4, -0.4) -- (1.4, 0.0) node[el, midway, right] {1} ;
  \draw [hl] (-1.0, -0.4) rectangle (-0.4, -1.0) node[midway] {$A$} ;
  \draw [hl] (-0.3, 0.3) rectangle (0.3, -0.3) node[midway] {$B$} ;
  \draw [hl] (1.1, -0.4) rectangle (1.7, -1.0) node[midway] {$A$} ;
  \draw [ht] (-0.3, 1.7) rectangle (0.3, 1.1) node[midway] {$*$} ;
  \draw [hl] (0.4, 1.0) rectangle (1.0, 0.4) node[midway] {$D$} ;
  \path (0.0, 2.1) pic{ev} ;
  \path (-0.7, 0.0) pic{ev} ;
  \path (1.4, 0.0) pic{ev} ;
  \path (0.7, 1.4) pic{ev} ;
  \path (0.0, 0.7) pic{iv} ;
  \node[text width = 4.0cm, font = \small, align=center] at (0.35, -1.75) {Step 5\\$P4$ $(4/8)$\\$D_1 B_1$ $(4/4)$} ;
 \end{scope}
 
 \node[text width = 1.0cm, align=center] at (16.8, 0.0) {$\overset{P9}{\Rightarrow}$} ;
 
 \begin{scope} [local bounding box = c1, shift = {(18.2, 0.0)}]
  \draw [ue] (-0.7, -0.4) -- (-0.7, 0.0) node[el, midway, left] {1} ;
  \draw [ue] (0.0, 1.7) -- (0.0, 2.1) node[el, midway, left] {1} ;
  \draw [ue] (-0.3, 1.4) -- (-0.7, 1.4) -- (-0.7, 0.0) node[el, midway, left] {2} ;
  \draw [ue] (0.3, 1.4) -- (0.7, 1.4) node[el, midway, above] {3} ;
  \draw [ue] (0.0, 0.3) -- (0.0, 0.7) node[el, midway, left] {1} ;
  \draw [ue] (-0.3, 0.0) -- (-0.7, 0.0) node[el, midway, above] {2} ;
  \draw [ue] (0.3, 0.0) -- (1.4, 0.0) node[el, midway, below] {3} ;
  \draw [ue] (0.7, 1.0) -- (0.7, 1.4) node[el, midway, right] {1} ;
  \draw [ue] (0.4, 0.7) -- (0.0, 0.7) node[el, midway, above] {2} ;
  \draw [ue] (1.0, 0.7) -- (1.4, 0.7) -- (1.4, 0.0) node[el, midway, right] {3} ;
  \draw [ue] (1.4, -0.4) -- (1.4, 0.0) node[el, midway, right] {1} ;
  \draw [hl] (-1.0, -0.4) rectangle (-0.4, -1.0) node[midway] {$A$} ;
  \draw [hl] (-0.3, 0.3) rectangle (0.3, -0.3) node[midway] {$B$} ;
  \draw [hl] (1.1, -0.4) rectangle (1.7, -1.0) node[midway] {$A$} ;
  \draw [ht] (-0.3, 1.7) rectangle (0.3, 1.1) node[midway] {$*$} ;
  \draw [ht] (0.4, 1.0) rectangle (1.0, 0.4) node[midway] {$+$} ;
  \path (0.0, 2.1) pic{ev} ;
  \path (-0.7, 0.0) pic{ev} ;
  \path (1.4, 0.0) pic{ev} ;
  \path (0.7, 1.4) pic{ev} ;
  \path (0.0, 0.7) pic{ev} ;
  \node[text width = 4.0cm, font = \small, align=center] at (0.35, -1.75) {Step 6\\$P9$ $(1/2)$\\No split} ;
 \end{scope}
 
 \node[text width = 1.0cm, align=center] at (20.3, 0.0) {$\overset{P7}{\Rightarrow}$} ;
 
 \node[text width = 1.0cm, align=center] at (0.7, -4.9) {$\overset{P7}{\Rightarrow}$} ;
 
 \begin{scope} [local bounding box = c1, shift = {(2.1, -4.9)}]
  \draw [ue] (-0.7, -0.4) -- (-0.7, 0.0) node[el, midway, left] {1} ;
  \draw [ue] (0.0, 1.7) -- (0.0, 2.1) node[el, midway, left] {1} ;
  \draw [ue] (-0.3, 1.4) -- (-0.7, 1.4) -- (-0.7, 0.0) node[el, midway, left] {2} ;
  \draw [ue] (0.3, 1.4) -- (0.7, 1.4) node[el, midway, above] {3} ;
  \draw [ue] (0.0, 0.3) -- (0.0, 0.7) node[el, midway, left] {1} ;
  \draw [ue] (-0.3, 0.0) -- (-0.7, 0.0) node[el, midway, above] {2} ;
  \draw [ue] (0.3, 0.0) -- (1.4, 0.0) node[el, midway, below] {3} ;
  \draw [ue] (0.7, 1.0) -- (0.7, 1.4) node[el, midway, right] {1} ;
  \draw [ue] (0.4, 0.7) -- (0.0, 0.7) node[el, midway, above] {2} ;
  \draw [ue] (1.0, 0.7) -- (1.4, 0.7) -- (1.4, 0.0) node[el, midway, right] {3} ;
  \draw [ue] (1.4, -0.4) -- (1.4, 0.0) node[el, midway, right] {1} ;
  \draw [hl] (-1.0, -0.4) rectangle (-0.4, -1.0) node[midway] {$A$} ;
  \draw [ht] (-0.3, 0.3) rectangle (0.3, -0.3) node[midway] {$*$} ;
  \draw [hl] (1.1, -0.4) rectangle (1.7, -1.0) node[midway] {$A$} ;
  \draw [ht] (-0.3, 1.7) rectangle (0.3, 1.1) node[midway] {$*$} ;
  \draw [ht] (0.4, 1.0) rectangle (1.0, 0.4) node[midway] {$+$} ;
  \path (0.0, 2.1) pic{ev} ;
  \path (-0.7, 0.0) pic{ev} ;
  \path (1.4, 0.0) pic{ev} ;
  \path (0.7, 1.4) pic{ev} ;
  \path (0.0, 0.7) pic{ev} ;
  \node[text width = 4.0cm, font = \small, align=center] at (0.35, -1.75) {Step 7\\$P7$ $(1/2)$\\No split} ;
 \end{scope}
 
 \node[text width = 1.0cm, align=center] at (4.2, -4.9) {$\overset{P3}{\Rightarrow}$} ;
 
 \begin{scope} [local bounding box = c1, shift = {(5.6, -4.9)}]
  \draw [ue] (-0.7, -0.4) -- (-0.7, 0.0) node[el, midway, left] {1} ;
  \draw [ue] (0.0, 1.7) -- (0.0, 2.1) node[el, midway, left] {1} ;
  \draw [ue] (-0.3, 1.4) -- (-0.7, 1.4) -- (-0.7, 0.0) node[el, midway, left] {2} ;
  \draw [ue] (0.3, 1.4) -- (0.7, 1.4) node[el, midway, above] {3} ;
  \draw [ue] (0.0, 0.3) -- (0.0, 0.7) node[el, midway, left] {1} ;
  \draw [ue] (-0.3, 0.0) -- (-0.7, 0.0) node[el, midway, above] {2} ;
  \draw [ue] (0.3, 0.0) -- (1.4, 0.0) node[el, midway, below] {3} ;
  \draw [ue] (0.7, 1.0) -- (0.7, 1.4) node[el, midway, right] {1} ;
  \draw [ue] (0.4, 0.7) -- (0.0, 0.7) node[el, midway, above] {2} ;
  \draw [ue] (1.0, 0.7) -- (1.4, 0.7) -- (1.4, 0.0) node[el, midway, right] {3} ;
  \draw [ue] (1.4, -0.4) -- (1.4, 0.0) node[el, midway, right] {1} ;
  \draw [hl] (-1.0, -0.4) rectangle (-0.4, -1.0) node[midway] {$A$} ;
  \draw [ht] (-0.3, 0.3) rectangle (0.3, -0.3) node[midway] {$*$} ;
  \draw [ht] (1.1, -0.4) rectangle (1.7, -1.0) node[midway] {$1$} ;
  \draw [ht] (-0.3, 1.7) rectangle (0.3, 1.1) node[midway] {$*$} ;
  \draw [ht] (0.4, 1.0) rectangle (1.0, 0.4) node[midway] {$+$} ;
  \path (0.0, 2.1) pic{ev} ;
  \path (-0.7, 0.0) pic{ev} ;
  \path (1.4, 0.0) pic{ev} ;
  \path (0.7, 1.4) pic{ev} ;
  \path (0.0, 0.7) pic{ev} ;
  \node[text width = 4.0cm, font = \small, align=center] at (0.35, -1.75) {Step 8\\$P3$ $(1/1)$\\No split} ;
 \end{scope}
 
 \node[text width = 1.0cm, align=center] at (7.7, -4.9) {$\overset{P2}{\Rightarrow}$} ;
 
 \begin{scope} [local bounding box = c1, shift = {(9.45, -4.9)}]
  \draw [ue] (-0.7, -0.4) -- (-0.7, 0.0) node[el, midway, left] {1} ;
  \draw [ue] (-1.0, -0.7) -- (-1.4, -0.7) -- (-1.4, -1.4) node[el, midway, right] {2} -- (-0.7, -1.4) ;
  \draw [ue] (-0.4, -0.7) -- (0.0, -0.7) -- (0.0, -1.4) node[el, midway, left] {3} -- (-0.7, -1.4) ;
  \draw [ue] (0.0, 1.7) -- (0.0, 2.1) node[el, midway, left] {1} ;
  \draw [ue] (-0.3, 1.4) -- (-0.7, 1.4) -- (-0.7, 0.0) node[el, midway, left] {2} ;
  \draw [ue] (0.3, 1.4) -- (0.7, 1.4) node[el, midway, above] {3} ;
  \draw [ue] (0.0, 0.3) -- (0.0, 0.7) node[el, midway, left] {1} ;
  \draw [ue] (-0.3, 0.0) -- (-0.7, 0.0) node[el, midway, above] {2} ;
  \draw [ue] (0.3, 0.0) -- (1.4, 0.0) node[el, midway, below] {3} ;
  \draw [ue] (0.7, 1.0) -- (0.7, 1.4) node[el, midway, right] {1} ;
  \draw [ue] (0.4, 0.7) -- (0.0, 0.7) node[el, midway, above] {2} ;
  \draw [ue] (1.0, 0.7) -- (1.4, 0.7) -- (1.4, 0.0) node[el, midway, right] {3} ;
  \draw [ue] (1.4, -0.4) -- (1.4, 0.0) node[el, midway, right] {1} ;
  \draw [ue] (-0.7, -1.8) -- (-0.7, -1.4) node[el, midway, left] {1} ;
  \draw [hl] (-1.0, -0.4) rectangle (-0.4, -1.0) node[midway] {$B$} ;
  \draw [ht] (-0.3, 0.3) rectangle (0.3, -0.3) node[midway] {$*$} ;
  \draw [ht] (1.1, -0.4) rectangle (1.7, -1.0) node[midway] {$1$} ;
  \draw [ht] (-0.3, 1.7) rectangle (0.3, 1.1) node[midway] {$*$} ;
  \draw [ht] (0.4, 1.0) rectangle (1.0, 0.4) node[midway] {$+$} ;
  \draw [hl] (-1.0, -1.8) rectangle (-0.4, -2.4) node[midway] {$A$} ;
  \path (0.0, 2.1) pic{ev} ;
  \path (-0.7, 0.0) pic{ev} ;
  \path (1.4, 0.0) pic{ev} ;
  \path (0.7, 1.4) pic{ev} ;
  \path (0.0, 0.7) pic{ev} ;
  \path (-0.7, -1.4) pic{iv} ;
  \node[text width = 4.0cm, font = \small, align=center] at (1.05, -1.75) {Step 9\\$P2$ $(2/2)$\\$B_1 A_1$ $(2/2)$} ;
 \end{scope}
 
 \node[text width = 1.0cm, align=center] at (11.55, -4.9) {$\overset{P6}{\Rightarrow}$} ;
 
 \begin{scope} [local bounding box = c1, shift = {(13.3, -4.9)}]
  \draw [ue] (-0.7, -0.4) -- (-0.7, 0.0) node[el, midway, left] {1} ;
  \draw [ue] (-1.0, -0.7) -- (-1.4, -0.7) -- (-1.4, -1.4) node[el, midway, right] {2} -- (-0.7, -1.4) ;
  \draw [ue] (-0.4, -0.7) -- (0.0, -0.7) -- (0.0, -1.4) node[el, midway, left] {3} -- (-0.7, -1.4) ;
  \draw [ue] (0.0, 1.7) -- (0.0, 2.1) node[el, midway, left] {1} ;
  \draw [ue] (-0.3, 1.4) -- (-0.7, 1.4) -- (-0.7, 0.0) node[el, midway, left] {2} ;
  \draw [ue] (0.3, 1.4) -- (0.7, 1.4) node[el, midway, above] {3} ;
  \draw [ue] (0.0, 0.3) -- (0.0, 0.7) node[el, midway, left] {1} ;
  \draw [ue] (-0.3, 0.0) -- (-0.7, 0.0) node[el, midway, above] {2} ;
  \draw [ue] (0.3, 0.0) -- (1.4, 0.0) node[el, midway, below] {3} ;
  \draw [ue] (0.7, 1.0) -- (0.7, 1.4) node[el, midway, right] {1} ;
  \draw [ue] (0.4, 0.7) -- (0.0, 0.7) node[el, midway, above] {2} ;
  \draw [ue] (1.0, 0.7) -- (1.4, 0.7) -- (1.4, 0.0) node[el, midway, right] {3} ;
  \draw [ue] (1.4, -0.4) -- (1.4, 0.0) node[el, midway, right] {1} ;
  \draw [ue] (-0.7, -1.8) -- (-0.7, -1.4) node[el, midway, left] {1} ;
  \draw [ht] (-1.0, -0.4) rectangle (-0.4, -1.0) node[midway] {$+$} ;
  \draw [ht] (-0.3, 0.3) rectangle (0.3, -0.3) node[midway] {$*$} ;
  \draw [ht] (1.1, -0.4) rectangle (1.7, -1.0) node[midway] {$1$} ;
  \draw [ht] (-0.3, 1.7) rectangle (0.3, 1.1) node[midway] {$*$} ;
  \draw [ht] (0.4, 1.0) rectangle (1.0, 0.4) node[midway] {$+$} ;
  \draw [hl] (-1.0, -1.8) rectangle (-0.4, -2.4) node[midway] {$A$} ;
  \path (0.0, 2.1) pic{ev} ;
  \path (-0.7, 0.0) pic{ev} ;
  \path (1.4, 0.0) pic{ev} ;
  \path (0.7, 1.4) pic{ev} ;
  \path (0.0, 0.7) pic{ev} ;
  \path (-0.7, -1.4) pic{ev} ;
  \node[text width = 4.0cm, font = \small, align=center] at (1.05, -1.75) {Step 10\\$P6$ $(1/2)$\\No split} ;
 \end{scope}
 
 \node[text width = 1.0cm, align=center] at (15.4, -4.9) {$\overset{P3}{\Rightarrow}$} ;
 
 \begin{scope} [local bounding box = c1, shift = {(17.15, -4.9)}]
  \draw [ue] (-0.7, -0.4) -- (-0.7, 0.0) node[el, midway, left] {1} ;
  \draw [ue] (-1.0, -0.7) -- (-1.4, -0.7) -- (-1.4, -1.4) node[el, midway, right] {2} -- (-0.7, -1.4) ;
  \draw [ue] (-0.4, -0.7) -- (0.0, -0.7) -- (0.0, -1.4) node[el, midway, left] {3} -- (-0.7, -1.4) ;
  \draw [ue] (0.0, 1.7) -- (0.0, 2.1) node[el, midway, left] {1} ;
  \draw [ue] (-0.3, 1.4) -- (-0.7, 1.4) -- (-0.7, 0.0) node[el, midway, left] {2} ;
  \draw [ue] (0.3, 1.4) -- (0.7, 1.4) node[el, midway, above] {3} ;
  \draw [ue] (0.0, 0.3) -- (0.0, 0.7) node[el, midway, left] {1} ;
  \draw [ue] (-0.3, 0.0) -- (-0.7, 0.0) node[el, midway, above] {2} ;
  \draw [ue] (0.3, 0.0) -- (1.4, 0.0) node[el, midway, below] {3} ;
  \draw [ue] (0.7, 1.0) -- (0.7, 1.4) node[el, midway, right] {1} ;
  \draw [ue] (0.4, 0.7) -- (0.0, 0.7) node[el, midway, above] {2} ;
  \draw [ue] (1.0, 0.7) -- (1.4, 0.7) -- (1.4, 0.0) node[el, midway, right] {3} ;
  \draw [ue] (1.4, -0.4) -- (1.4, 0.0) node[el, midway, right] {1} ;
  \draw [ue] (-0.7, -1.8) -- (-0.7, -1.4) node[el, midway, left] {1} ;
  \draw [ht] (-1.0, -0.4) rectangle (-0.4, -1.0) node[midway] {$+$} ;
  \draw [ht] (-0.3, 0.3) rectangle (0.3, -0.3) node[midway] {$*$} ;
  \draw [ht] (1.1, -0.4) rectangle (1.7, -1.0) node[midway] {$1$} ;
  \draw [ht] (-0.3, 1.7) rectangle (0.3, 1.1) node[midway] {$*$} ;
  \draw [ht] (0.4, 1.0) rectangle (1.0, 0.4) node[midway] {$+$} ;
  \draw [ht] (-1.0, -1.8) rectangle (-0.4, -2.4) node[midway] {$1$} ;
  \path (0.0, 2.1) pic{ev} ;
  \path (-0.7, 0.0) pic{ev} ;
  \path (1.4, 0.0) pic{ev} ;
  \path (0.7, 1.4) pic{ev} ;
  \path (0.0, 0.7) pic{ev} ;
  \path (-0.7, -1.4) pic{ev} ;
  \node[text width = 4.0cm, font = \small, align=center] at (1.05, -1.75) {Step 11\\$P3$ $(1/1)$\\No split} ;
 \end{scope}
\end{tikzpicture}%
}
\caption{A derivation $d = A^{\bullet} \Rightarrow^*_P H$ using the grammar of Figure \ref{fig:cnftggrammar}}
\label{fig:termsgraphderivation}
\end{figure}

\begin{proof}
Let's consider a measure equivalent to the size of a hypergraph $|H|$. To each application of the recursive function $\textbf{derH}$ in each step of the algorithm $\textbf{Gen}$, corresponds a direct derivation between two sentential forms $F \Rightarrow F'$ such that $F \leq F'$.
Since the grammar is in $\textit{CNF}$, at each step there are two possible cases:

\begin{enumerate}
\item[1.] $\textbf{derH}$ chooses a non-terminal production. In this case a single hyperedge $e \in F$ is replaced with a hypergraph $R \subseteq F'$ containing $2$ hyperedges and $0$ or more internal nodes. Clearly $|F| < |F'|$, meaning that the size of the sentential forms gets progressively close to $n$.

\item[2.] $\textbf{derH}$ chooses a terminal production. A hyperedge is replaced by a terminal hyperedge or a single node and $0$ or more additional internal nodes. In this case $|F| \leq |F'|$. Even if the size is not incremented, being a terminal production, the recursion does not progress any further.
\end{enumerate}

If it is not possible to generate a size-$n$-hypergraph using the input grammar $G$ the algorithm trivially ends in one step.
\end{proof}

\section{Uniform distribution and time complexity}
\label{sec:uniformcomplexity}

We now state our first main result, the uniform generation guarantee for Algorithm \ref{alg:GenerationPhase}.

\begin{theorem}\label{the:UniformSampling}
Given a grammar $G = (N,\Sigma,P,S,(mark_{p})_{p \in P})$, Algorithm \ref{alg:GenerationPhase} generates from every non-terminal $A \in N$ a size-$n$-hypergraph $H \in L^A_n(G)$, provided that $L^A_n(G) \neq \emptyset$. If $G$ is $n$-unambiguous and $\textit{RNG}$ is a uniform random number generator, the hypergraph is chosen uniformly at random.
\end{theorem}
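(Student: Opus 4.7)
The plan is to separate the argument into three claims: (i) the matrix entries have a precise combinatorial meaning, (ii) Algorithm \ref{alg:GenerationPhase} always returns a well-formed member of $L^A_n(G)$ when this set is non-empty, and (iii) in the non-ambiguous case the distribution is uniform. The bridge between the algorithm and the target distribution will be the set of ordered derivation trees $T_P$.

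First, I would prove by simultaneous induction on $\ell$ that, for every $A \in N$ and every production $p \in P^A$, the entry $A[\ell]$ (respectively $p[\ell]$) is exactly the number of ordered derivation trees in $T_P$ rooted at some $q \in P^A$ (respectively rooted at $p$) whose yield is a hypergraph of size $\ell + \textit{type}(A)$. The base case is given by the initialisation of terminal productions in \textbf{Pre}: a production $A \overset{p}{\longrightarrow} a,i$ yields exactly one tree of size $i+1$, and a production $A \overset{p}{\longrightarrow} \lambda,i$ yields exactly one tree of size $i$. The inductive step for a non-terminal production $A \overset{p}{\longrightarrow} BC,i$ follows from the bijection between ordered derivation trees for $p$ of total size $\ell+\textit{type}(A)$ and pairs $(t_B,t_C)$ of ordered derivation trees whose yields have sizes $k+\textit{type}(B)$ and $(\ell-i-k)+\textit{type}(C)$ respectively; this is precisely what the inner sum in \textbf{Pre} counts. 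The sum $A[\ell] = \sum_{p \in P^A} p[\ell]$ then partitions trees by their root production.

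Second, for correctness I would argue by induction on $\ell$ that if $A[\ell] > 0$ then the call $\textbf{derH}(A,\ell)$ terminates and returns a hypergraph $H$ with $|H| = \ell + \textit{type}(A)$ that is derivable from $A^{\bullet}$. Termination relies on the non-contracting property noted just before the statement and on the CNF shape of productions: each non-terminal step splits $\ell$ into two strictly smaller positive parts, each terminal step is a leaf. Membership in $L^A_n(G)$ follows because the recursion exactly mirrors a derivation $A^{\bullet} \Rightarrow^{*}_P H$ built from the replacements $A^{\bullet}[e/R[e_\alpha/H_B, e_\beta/H_C]]$, with the size equality preserved by the split $k + (\ell'-k) = \ell - i$ and the contribution of $i$ internal nodes in $R$.

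Third, for uniformity I would show by induction on $\ell$ that, under a uniform $\textit{RNG}$, the probability that $\textbf{derH}(A,\ell)$ produces any \emph{fixed} ordered derivation tree $t$ of total size $\ell + \textit{type}(A)$ is exactly $1/A[\ell]$. In the terminal case, the chosen production has $p[\ell]=1$, so the production is picked with probability $1/A[\ell]$ and there is a unique tree. In the non-terminal case, with $t$ having root production $p = A \overset{p}{\longrightarrow} BC,i$ and subtrees $t_B, t_C$ of sizes $k+\textit{type}(B)$ and $(\ell'-k)+\textit{type}(C)$,
\[
\Pr[t] \;=\; \frac{p[\ell]}{A[\ell]} \cdot \frac{B[k]\cdot C[\ell'-k]}{p[\ell]} \cdot \frac{1}{B[k]} \cdot \frac{1}{C[\ell'-k]} \;=\; \frac{1}{A[\ell]},
\]
where the two middle factors are the split probability and the last two come from the inductive hypothesis. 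Since $A[\ell]$ counts all ordered derivation trees of the given size by claim (i), the algorithm samples uniformly from this tree set. If $G$ is $n$-unambiguous, the map $t \mapsto \textit{yield}(t)$ restricted to trees of size $n$ is a bijection onto $L^A_n(G)$, so uniformity transfers to hypergraphs.

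The main obstacle I anticipate is the bookkeeping in claim (i): one must match the size $\ell + \textit{type}(A)$ stored in $A[\ell]$ with the actual size $|R| + |H_B| + |H_C| - \textit{type}(B) - \textit{type}(C)$ produced by the replacement, carefully accounting for the $i$ extra internal nodes of $R$ and for the fact that the attachment nodes of $e_\alpha, e_\beta$ in $R$ are identified with the external nodes of the generated subgraphs rather than added anew. Once this size accounting is set up cleanly, the telescoping in the uniformity step and the bijection argument in the non-ambiguous case are essentially mechanical.
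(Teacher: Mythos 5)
Your proposal is correct and follows essentially the same route as the paper: the core of both arguments is the telescoping product $\frac{p[\ell]}{A[\ell]}\cdot\frac{B[k]C[\ell'-k]}{p[\ell]}\cdot\frac{1}{B[k]}\cdot\frac{1}{C[\ell'-k]}=\frac{1}{A[\ell]}$ over the choices along a leftmost derivation, with $n$-unambiguity used to turn uniformity over ordered derivation trees into uniformity over $L^A_n(G)$. Your write-up merely makes explicit, as separate induction claims, the counting interpretation of the matrix entries and the correctness of \textbf{derH}, which the paper states informally in the description of \textbf{Pre} and in a separate termination argument.
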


\begin{proof}
Let $G$ be an $n$-unambiguous grammar in \textit{CNF}, the recursive function $\textbf{derH}$ derives a hypergraph $H \in L^{\bar{A}}_{\bar{n}}(G)$ simulating $\textit{trav}(t)$ where $\textit{yield}(t) = H$ and let $P(c_j)$ denote the probability of the $j$th choice $c$ made using the $\textit{RNG}$ at each step of the recursion, for a production or a split, according to $\textit{lmd}(H)$.

Let's recall that for the parallelization, confluence and associativity properties of context-free hyperedge replacement grammars \cite{courcelle-1987-aad}, the sequence of replacements associated to a derivation preserves the result of the derivation, despite of the order in which the replacements are applied. Thus, we are able to discuss each of its steps independently.

By definition, since the grammar is $n$-unambiguous, for any non-terminal $A \in N$ we know that the set of hypergraphs that can be generated using different productions $p \in P^A$ are pairwise distinct. Otherwise, there would exist $\textit{trav}(t') \neq \textit{trav}(t'')$ for which $\textit{yield}(t') \cong \textit{yield}(t'')$.

From algorithm $\textbf{Pre}$ (Alg. \ref{alg:PreprocessingPhase}) we know that $\sum_{p \in P^A} p[\ell] = A[\ell]$ and so the probability of the choice $c_j$ of each production in $\textit{lmd}(H)$ can be expressed by $P(c_j) = p[\ell]/A[\ell]$. Also, if $p \in P_N$, since the grammar is $n$-unambiguous the subsets of hypergraphs that can be derived by choosing different splits are also pairwise distinct. For a production $p \in P_N$ then $\sum_{0 < k < \ell} B[k] \cdot C[\ell' - k] = A \overset{p}{\longrightarrow} BC,i[\ell]$, thus a split can be chosen with probability $P(c_j) = B[k] \cdot C[\ell' - k] / p[\ell]$.

Knowing that for an $\textit{lmd}$, if the grammar is $n$-unambiguous, both the choices of productions and splits are made from independent sets, considering the corresponding derivation tree $t$, the probabilities associated to the choice of a node $P(c)$ and the ones associated to its children $P(c')$ and $P(c'')$ are of the form $\frac{m}{q}$, $\frac{m'}{q'}$ and $\frac{m''}{q''}$ with $m,m',m'',q,q',q'' \in \mathbb{N}$ and $q'q'' = m$. Moreover, the probabilities of two consecutive choices $P(c)$ and $P(c')$ are bound to the law of compound probabilities \cite{laplace-1820-tad}, that is, the choice of a node given the choice of its parent is of the form $P(c'|c)=P(c' \cap c)/P(c)$. Then, considering their independence, $P(c'|c)=(P(c')P(c))/P(c) = P(c')$. The same applies for $P(c'')$. The overall probability of the choice of a node and its children is then $P(c)P(c')P(c'') = \frac{m}{q}\frac{m'}{q'}\frac{m''}{q''} = \frac{m'm''}{q}$.

Finally, considering the chain of probabilities described by an $\textit{lmd}$, since for $\bar{A}$ $q = |L_{\bar{n}}(G)|$ and for each terminal production $p \in P_{\Sigma}$ $m = 1$, then for each $H \in L_{\bar{n}}(G)$ we can define its probability $P(H)$ to be generated as the productory of independent choices:

\begin{equation*}
\label{eq:UniformProbability}
P(H)={\underset{j = 1}{\overset{k}{\prod}}} P(c_j) = \frac{m_1}{|L_{\bar{n}}(G)|} \cdot \frac{m_2}{q_2} \cdot \frac{m_{k-1}}{q_{k-1}} \ldots \frac{1}{q_k}  = \frac{1}{|L_{\bar{n}}(G)|}
\end{equation*}

Each hypergraph $H \in L_{\bar{n}}(G)$ is generated over a uniform distribution given the uniformity of the sampling of the underlying $\textit{RNG}$.

\end{proof}

For the complexity analysis we consider the time required by the algorithm $\textbf{Gen}$ (Alg. \ref{alg:GenerationPhase}) for the generation of the hypergraph and the space required by the algorithm $\textbf{Pre}$ (Alg. \ref{alg:PreprocessingPhase}) to store the required data, taking into account that the input grammar is already provided in the correct $\textit{CNF}$ and the query to the $\textit{RNG}$ and the replacement operations are performed in unit time. The gaps present in the tables, that are not encountered in string method, are due to the possibility of a production to increase the size of the resulting hypergraph by more than $1$ in a single step.

\begin{theorem}\label{the:TimeComplexity}
With the assumptions of Theorem \ref{the:UniformSampling}, the size-$n$-hypergraph $H$ is generated by $\textbf{Gen}$ (Alg. \ref{alg:GenerationPhase}) in time $O(n^2)$.
\end{theorem}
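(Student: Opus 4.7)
The plan is to charge the running time of $\textbf{Gen}$ (Alg.\ \ref{alg:GenerationPhase}) against the ordered derivation tree $t$ of the output hypergraph $H$, and then to sum costs over the nodes of $t$. The first step is to bound the number of recursive calls to $\textbf{derH}$. In \textit{CNF}, each non-terminal application $A \overset{p}{\longrightarrow} BC,i$ increases the size of the current sentential form by $1+i\ge 1$, while each terminal application $A \overset{p}{\longrightarrow} a,i$ or $A \overset{p}{\longrightarrow} \lambda,i$ accounts for at least one hyperedge or internal vertex of $H$. A simple bookkeeping argument (in a binary derivation tree, the number of leaves equals the number of internal nodes plus one) then gives that the total number of applied productions, and therefore the total number of calls to $\textbf{derH}$, is at most $2n+1 = O(n)$.

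The second step is to bound the cost of a single call $\textbf{derH}(A,\ell)$. Sampling the production $p \in P^A$ according to the distribution $p[\ell]/A[\ell]$ costs $O(|P^A|) = O(1)$, because the grammar $G$ is a fixed input and $|P|$ is constant. If $p$ is terminal, the call performs the unit-cost replacement $A^{\bullet}[e/R]$ and returns in $O(1)$. If $p$ is non-terminal, sampling a split $k \in \{1,\dots,\ell'-1\}$ from the precomputed distribution $B[k]\cdot C[\ell'-k]/(A \overset{p}{\longrightarrow} BC,i)[\ell]$ by a linear scan over the (already-built) weights in $M_1$ takes $O(\ell) = O(n)$ time; the two subsequent substitutions are $O(1)$ by hypothesis.

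Combining the two estimates, the total running time of $\textbf{Gen}$ is $\sum_{v \in t} O(\ell_v) \le O(n)\cdot O(n) = O(n^2)$, as claimed. The main delicate point is the linear bound on $|t|$: unlike Mairson's string setting, where each leaf of the derivation tree contributes exactly one terminal symbol, here a terminal production $A \overset{p}{\longrightarrow} \lambda, i$ or $A \overset{p}{\longrightarrow} a, i$ can introduce several isolated vertices at once, so one has to argue via the binary branching structure of \textit{CNF} rather than by directly charging one tree node per unit of size. Once this linear bound is secured, the remainder of the argument is the standard ``linear number of calls, each costing a linear scan over precomputed split weights'' reasoning from \cite{mairson-1994-gwi}, and the quadratic bound follows immediately.
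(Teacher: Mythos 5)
Your proposal is correct and reaches the bound by the same underlying count as the paper --- a linear number of recursive calls, each dominated by a linear-time scan to sample the split --- but the packaging differs in a way worth noting. The paper phrases the argument as a recurrence $T(n) \leq cn + \max_{1 \leq k < n-i}[T(k) + T(n-k-i)]$ and justifies the $O(n^2)$ solution by examining only the degenerate worst case $k=1$, $i=0$ (the ``chain'' derivation), asserting that the split choice is made $n$ times. You instead charge the cost to the nodes of the ordered derivation tree and prove the linear bound on the number of calls explicitly: since \textit{CNF} derivation trees are full binary trees and every terminal production contributes at least one fresh hyperedge or internal vertex to $H$, there are at most $n$ leaves and hence $O(n)$ nodes in total. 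This fills in the one step the paper leaves implicit --- termination after $O(n)$ calls is not immediate from size growth alone, because terminal productions only guarantee $|F| \leq |F'|$, and a production $A \overset{p}{\longrightarrow} \lambda, i$ can account for several vertices at once, so one cannot simply charge one call per unit of size as in Mairson's string setting. Your summation $\sum_{v \in t} O(\ell_v) = O(n^2)$ also covers all tree shapes uniformly, whereas the paper's recurrence is only unrolled for the unbalanced case (a balanced split in fact also satisfies the recurrence with an $O(n^2)$ solution, but the paper does not verify this). What the paper's version buys in exchange is a concrete witness for the worst case, namely the discrete hypergraph language where every step peels off a single terminal hyperedge. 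Both arguments rely on the same modelling assumptions stated before the theorem (constant grammar size, unit-cost replacements and RNG queries), and your proof is a valid, slightly more complete substitute.
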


\begin{proof}
The proof of Theorem \ref{the:TimeComplexity} is based on the analysis of the following recurrence relation for the function $\textbf{derH}$: $T(n) \leq cn + \underset{1 \leq k < (n - i)}{\textit{max}}[T(k) + T(n - k - i)]$, where $T(k)$ and $T(n - k - i)$ are the computational steps required to process the result of the split and $i$ is the number of internal nodes of the current production. In the worst case, we consider that $i = 0$ and that $k = 1$. A simple example is the discrete hypergraph language in which every iteration may generate a terminal hyperedge from $e_{\alpha}$ and the rest of the resulting hypergraph from $e_{\beta}$ without adding any node.
Since the choice of the production is constant, while the choice of a split is linear in $n$, choosing a split $n$ times leads to a quadratic behavior.

Since $i \ll n$, we may rewrite the recursion as:

\begin{equation*}
T(n) \leq cn + \underset{1 \leq k < n}{\textit{max}}[T(k) + T(n - k)]
\end{equation*}

Then, considering the worst case $k = 1$, for the next step of the recursion we obtain:

\begin{equation*}
T(n - 1) \leq c(n - 1) + \underset{1 \leq k < (n - 1)}{\textit{max}}[T(k) + T(n - k - 1)]
\end{equation*}

That is, at each step the choice of a split happens on an input of size $n - 1$. Since this choice requires linear time and it is taken $n$ times, the relation has solution $O(n^2)$.
\end{proof}

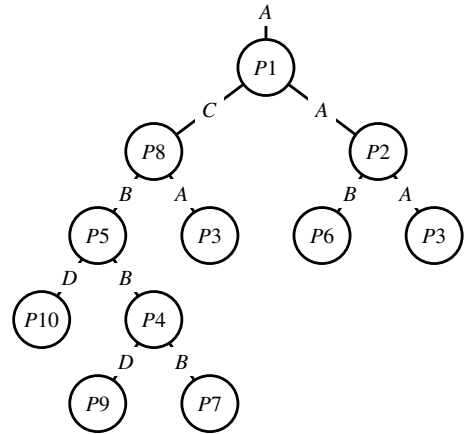
\begin{figure}[htpb]
\centering
\resizebox{6.0cm}{!}{%
\begin{tikzpicture}
 \begin{scope} [local bounding box = c1, shift = {(0.0, 0.0)}]
  \draw [ue] (0.0, 1.0) -- (0.0, 0.0) ;
  \draw [ue] (0.0, 0.0) -- (-2.0, -1.5) node [midway, te] {$C$} ;
  \draw [ue] (0.0, 0.0) -- (2.0, -1.5) node [midway, te] {$A$} ;
  \draw [ue] (-2.0, -1.5) -- (-3.0, -3.0) node [midway, te] {$B$} ;
  \draw [ue] (-2.0, -1.5) -- (-1.0, -3.0) node [midway, te] {$A$} ;
  \draw [ue] (2.0, -1.5) -- (1.0, -3.0) node [midway, te] {$B$} ;
  \draw [ue] (2.0, -1.5) -- (3.0, -3.0) node [midway, te] {$A$} ;
  \draw [ue] (-3.0, -3.0) -- (-4.0, -4.5) node [midway, te] {$D$} ;
  \draw [ue] (-3.0, -3.0) -- (-2.0, -4.5) node [midway, te] {$B$} ;
  \draw [ue] (-2.0, -4.5) -- (-3.0, -6.0) node [midway, te] {$D$} ;
  \draw [ue] (-2.0, -4.5) -- (-1.0, -6.0) node [midway, te] {$B$} ;
  \path (0.0, 0.0) pic{cn} node[] {$P1$} ;
  \path (-2.0, -1.5) pic{cn} node[] {$P8$} ;
  \path (2.0, -1.5) pic{cn} node[] {$P2$} ;
  \path (-3.0, -3.0) pic{cn} node[] {$P5$} ;
  \path (-1.0, -3.0) pic{cn} node[] {$P3$} ;
  \path (1.0, -3.0) pic{cn} node[] {$P6$} ;
  \path (3.0, -3.0) pic{cn} node[] {$P3$} ;
  \path (-4.0, -4.5) pic{cn} node[] {$P10$} ;
  \path (-2.0, -4.5) pic{cn} node[] {$P4$} ;
  \path (-3.0, -6.0) pic{cn} node[] {$P9$} ;
  \path (-1.0, -6.0) pic{cn} node[] {$P7$} ;
  \node [te] at (0.0, 1.0) {$A$} ;
 \end{scope}
\end{tikzpicture}%
}
\caption{Ordered tree $t$ for the derivation $d$ in Figure \ref{fig:termsgraphderivation}}
\label{fig:tree}
\end{figure}

We omit a discussion of the time complexity of the pre-processing phase (Alg. \ref{alg:PreprocessingPhase}) which can be shown to be linear, considering that given a grammar $G$ in $\textit{CNF}$, being its size $|G|$ constant, for each production $p$ a short form containing the information about the labels and the internal nodes is obtained in constant time.

\section{Conclusion}
\label{sec:conclusion}
Our main results, presented in Section \ref{sec:uniformcomplexity}, are that the method generates hypergraphs uniformly at random and in quadratic time. A topic for future work is to design an alternative generation algorithm that runs in linear time and quadratic space, following Mairson's second method in \cite{mairson-1994-gwi}. 

Another interesting topic is to extend the quasi-polynomial-time approximation algorithm of Gore et al.\ \cite{gore-1997-aqp} from strings to hypergraphs. This algorithm guarantees an approximated uniform distribution even for ambiguous grammars.

Our method allows to generate strings uniformly at random in some non-context-free string languages because hyperedge replacement grammars can specify certain \emph{string graph}\/ languages that are  not context-free. For example, this applies to the language $\{a^nb^nc^n \mid n \geq 0\}$. Moreover, our method is able to generate strings uniformly at random for a range of inherently ambiguous context-free languages.

The practically most promising application of our generation approach is the testing of programs in arbitrary programming languages that work on graphs. If the inputs of such programs are graphs in a context-free graph language, our method can generate test graphs uniformly at random in the domain of interest.
This should allow to refine random testing approaches such as \cite{chen-2010-art,hamlet-2002-rt0}.

\bibliography{final}
\end{document}